\RequirePackage{amssymb}
\documentclass{article}
\usepackage{mathtools,amsthm}
\usepackage{color,svgcolor}
\usepackage{graphicx}
\usepackage{multirow}
\usepackage{amsfonts}
\usepackage{enumitem}
\usepackage{xspace}

\usepackage{algorithm}
\usepackage{algcompatible}
\newcommand{\To}{\textbf{to}\xspace}

\newcommand{\algorithmicreturn}{\textbf{return}}
\newcommand{\RETURN}{\STATE\algorithmicreturn{}\xspace}
\algnewcommand{\IfThen}[2]{%
  \State \algorithmicif\ #1\ \algorithmicthen\ #2}
\algnewcommand{\IfThenEnd}[2]{%
  \State \algorithmicif\ #1\ \algorithmicthen\ #2\ \algorithmicend\ \algorithmicif}
\algnewcommand{\IfThenElse}[3]{%
  \State \algorithmicif\ #1\ \algorithmicthen\ #2\ \algorithmicelse\ #3}
\algnewcommand{\ForDoEnd}[2]{%
  \State \algorithmicfor\ #1\ \algorithmicdo\ #2\ \algorithmicend\ \algorithmicfor}

\algnewcommand\algorithmicreadonly{\textbf{Read-only:}}
\algnewcommand\READONLY{\item[\algorithmicreadonly]}%

\newcommand{\F}{\ensuremath{\mathbb{F}}\xspace}
\newcommand{\SF}{\ensuremath{\mathbb{S}}\xspace}

\newcommand{\MatrixRank}[1]{\ensuremath{{\textup{Rank}\,{#1}}}}
\DeclareMathOperator{\pe}{\,\ensuremath{\mathrel{+}=}\,}

\usepackage{listings}
\lstdefinestyle{slp}{
   basicstyle=\ttfamily\scriptsize,
   breaklines=true,
   breakatwhitespace=true,
   breakindent=0pt,
   string=[d]{;},
   keywordstyle=\color{blue},
   showstringspaces=false,
   numbers=none
}

\newcommand{\plinopt}{\href{https://github.com/jgdumas/plinopt}{\textsc{PLinOpt}}\xspace}
\newcommand{\plinoptdata}[1]{\href{https://github.com/jgdumas/plinopt/tree/main/data}{\plinopt/data/\url{#1}}}

\newcommand{\LRP}{\textsc{lrp}\xspace}
\newcommand{\SLP}{\textsc{slp}\xspace}

\newcommand{\SSLP}{\mathrm{SSLP}}

\newcommand{\matrixsize}[2]{\ensuremath{{{#1}{\times}{#2}}}}

\newcommand{\Matrices}[2]{\ensuremath{{{#1}^{#2}}}}

\DeclareMathOperator{\tensorproduct}{\ensuremath{\otimes}}

\newcommand{\mat}[1]{\textbf{\sc{\ensuremath{#1}}}}
\newcommand{\Transpose}[1]{{{\mat{#1}}^{\intercal}}\xspace}

\usepackage{nicematrix}
\NiceMatrixOptions{%
    renew-dots,%
    xdots/shorten=.5em, %
    xdots/inter=.65em, %
}
\newenvironment{smatrix}[1][]{\left\lbrack\begin{NiceMatrix}[small,#1]}{\end{NiceMatrix}\right\rbrack}
\newcommand{\emailto}[1]{\href{mailto:#1}{#1}}
\newcommand{\affiliation}[1]{\footnote{#1}}
\newcommand{\grantsponsor}[3]{\href{#3}{#2}}
\newcommand{\grantnum}[2]{#2}
\title{Computational Explorations on Semifields}
\author{Jean-Guillaume Dumas%
\affiliation{
  {Universit\'e Grenoble Alpes, Laboratoire Jean Kuntzmann},
  {umr CNRS 5224, F38058 Grenoble},
  {France}.
\emailto{Jean-Guillaume.Dumas@univ-grenoble-alpes.fr}}
\and{Stefano Lia}%
\affiliation{
  {Ume\aa~University, Matematik och matematisk statistik},
  {901 87 Ume\aa},
  {Sweden}.
\emailto{Stefano.Lia@umu.se}
}
\and{John Sheekey}%
\affiliation{%
  {University College Dublin, School of Mathematics and Statistics},
  {Belfield,Dublin 4},
  {Ireland}.
\emailto{John.Sheekey@ucd.ie}
}
}

\usepackage{fullpage}
\usepackage{booktabs}
\usepackage{hyperref}

\makeatletter
\hypersetup{%
	pdftitle={\@title},
	pdfauthor={Jean-Guillaume Dumas and Stefano Lia and John Sheekey}
	plainpages=true,
	breaklinks=true,
	colorlinks=true,
	linkcolor=darkred,
	citecolor=blue,
	urlcolor=darkgreen,
	hypertexnames=false,
}
\makeatother

\usepackage{aliascnt}
\usepackage[capitalize]{cleveref}
\AddToHook{cmd/appendix/before}{%
    \crefalias{section}{appendix}%
    \crefalias{subsection}{appendix}
}

\let\oldtheorem\newtheorem
\RenewDocumentCommand{\newtheorem}{s m o m O{}}{%
\IfBooleanTF{#1}%
{\oldtheorem{#2}{#4}}%
{\IfNoValueTF{#3}{\oldtheorem{#2}{#4}[#5]}%
{\newaliascnt{#2}{#3}%
\oldtheorem{#2}[#2]{#4}%
\aliascntresetthe{#2}}}}

\newcommand{\algabbrev}{Alg}
\newcommand{\algname}{{\algabbrev}.}
\newcommand{\algsname}{{\algabbrev}s.}
\newcommand{\Algname}{Algorithm}
\newcommand{\Algsname}{{\Algname}s}
\crefname{algorithm}{\algname}{\algsname}
\Crefname{algorithm}{\Algname}{\Algsname}

\newtheorem{theorem}{Theorem}
\newtheorem{corollary}[theorem]{Corollary}

\newtheorem{definition}[theorem]{Definition}

\newtheorem{lemma}[theorem]{Lemma}

\newtheorem{remark}[theorem]{Remark}
\newtheorem{example}[theorem]{Example}

\begin{document}
\maketitle
\begin{abstract}
A finite semifield is a division algebra over a finite field where
multiplication is not necessarily associative.
We consider here the complexity of the multiplication
in small semifields and finite field extensions.
For this operation, the number of required base field multiplications
is the tensor rank, or the multiplicative complexity.
The other base field operations are additions and scalings by
constants, which together we refer to as the additive complexity.
When used recursively, the tensor rank determines the exponent while
the other operations determine the constant of the associated
asymptotic complexity bounds.
For small extensions, both measures are of similar importance.

In this paper, we establish the tensor rank of some semifields and
finite fields of characteristics 2 and 3.
We also propose new upper and lower bounds on their additive
complexity, and give new associated algorithms improving on the
state-of-the-art in terms of overall complexity. We achieve this by
considering short straight line programs for encoding linear codes
with given parameters.
\end{abstract}
\thanks{This work was partially supported by the
  \grantsponsor{ulysses}{PHC ULYSSES program (Campus France / Taighde
    \'Eireann -- Research Ireland)}{https://www.campusfrance.org/fr/ulysses}
  under grant
  \grantnum{ulysses}{\href{https://research.ie/assets/uploads/2023/09/Ulysses-2024-awardees.pdf}{50216SG}}
  as well as by the
  \grantsponsor{anr}{Agence Nationale de la Recherche}{https://anr.fr}
  under grant
  \grantnum{anr}{\href{https://anr.fr/ProjetIA-22-PECY-0010}{ANR-22-PECY-0010}}.}
\section{Introduction}\label{sec:intro}
Establishing the computational complexity of performing multiplication
in an algebra over a finite field is a well-studied problem. Famous
examples include matrix multiplication, polynomial multiplication, and
finite field extension multiplication. More recently, nonassociative
division algebras ({\em semifields}) have been addressed,
with applications to coding theory~\cite{Combarro:2011:IJCM},
cryptology~\cite{Rua:2018:sboxes} or graph theory~\cite{May:2007:issac}.

Most frequently, the multiplicative complexity ({\em tensor rank}) has
been the focus of study, in particular for recursive
implementations. Moreover, for small finite field polynomial
multiplications for example, the potentially fastest implementations
would exhaustively pre-compute the results in tables for all possible
polynomials.

In embedded cryptography, however, at least
since~\cite{Kocher:1996:TimingAttacks},
one way to ensure side-channel resistance is to avoid
branching and table lookups, as they can leak significant information
about the secrets (for instance, by timing or cache attacks, see,
e.g.~\cite{Kasper:2012:FastECCopenSSL,Bernstein:2012:High-speed}).
There, so-called constant-time implementations (i.e. not
depending on the actual values of same-size input) often fall back to
polynomial arithmetic. Therefore, it is important to minimize the total
number of field operations.
Another area of interest includes cryptography in larger finite fields,
implemented as towers of extension. There, the base field operations
can be implemented with table lookups, but the extensions also use
modular polynomial operations directly. It is again important to reduce
both the number of multiplications and additions to obtain fast
implementations; see e.g.~\cite{EGI:2011:africacrypt}.

By interpolation (the Toom-Cook method), when $q\geq 2n-1$, the tensor
rank of the multiplication in the field $\F_{q^n}$ is upper bounded by
$2n-1$~\cite{toom:1963:complexity}.
When $q$ is small, the question remains open in general.
The algorithms of~\cite{Barbulescu:2012:bilmaps,Covanov:2019:formulas}
can then find optimal formulae (straight line programs, \SLP) to
evaluate bilinear maps over finite fields.
Now, in \cite{Lavrauw:2022:tranksemi} it was shown that there exist
semifields with lower multiplicative complexity than the field of the
same order.
However, since semifield multiplication cannot usually be applied
recursively, it remains to consider the overall complexity.
The other base field operations are additions and scalings by
constants, which together we refer to as the additive complexity.
Up to our knowledge, much less is known about additive complexity,
even if optimal \SLP{s} are given for some Toom-Cook formulae in
\cite{Bodrato:2007:issac,Nissim:2026:Division}.

We here consider algebras of small dimension and characteristic, using
both computational and theoretical tools to improve the best known
complexity for various cases.

\paragraph{Contributions.}
More precisely, we start with the different representations of
bilinear maps as tensors and matrix subspaces in~\cref{sec:prelim},
where we also improve on a randomized search for additive
complexities.
We then establish the tensor rank of algebras of order~$243$
in~\cref{sec:trank}.
As the multiplication in finite fields or semifields is encoded by
matrix subspaces, we also consider, in~\cref{sec:SSLP}, upper bounds
on the shortest straight-line programs to compute the matrix-vector
product for the generator matrix of small linear codes over~$\F_2$
and~$\F_3$.
Our exploration of the additive complexities continues
in~\cref{sec:folding} by the definition of a folding technique for the
multiplication of polynomials modulo another polynomial.
Finally, we establish the additive complexity of algebras of
order~$81$ in~\cref{sec:deg4} and improve on those of algebras of
order~$32$ and~$243$ in~\cref{sec:deg5}. This allows us also give
improved algorithms for multiplications in extensions of degree~$5$
over~$\F_3$.

\section{Preliminaries}\label{sec:prelim}

Let us regard $\F^n$ as the space of column vectors over the field $\F$, and $(\F^n)^*$ as the space of row vectors.

\begin{definition}
    Consider $T\in (\F^n)^*\otimes (\F^n)^*\otimes F^n$. The {\em tensor rank} of $T$ is the minimum integer $r$ such that there exists a decomposition $T=\sum_{i=1}^rL_i\otimes R_i\otimes P_i$.
\end{definition}

\begin{definition}\label{def:LRPRepresentation}
The \emph{\LRP} representation of a tensor
decomposition~\({T=\sum_{i=1}^{r}{L_i}\tensorproduct{R_i}\tensorproduct{P_i}}\)
is formed by the three matrices~\(\mat{L}\)
in~\(\Matrices{\F}{\matrixsize{r}{n}}\),~\(\mat{R}\)
in~\(\Matrices{\F}{\matrixsize{r}{n}}\),
and~\(\mat{P}\)
in~\(\Matrices{\F}{\matrixsize{n}{r}}\), such
that $L_i$ is the $i$-th row of $L$, $R_i$ is the $i$-th row of $R$, and $P_i$ is the $i$-th column of $P$.
\end{definition}

$L$ is the \emph{Left-matrix},
$R$ is the \emph{Right-matrix},
$P$ is the \emph{Post-matrix} (or the \emph{Product-matrix}) of an
\LRP representation. We will write $T=T(L,R,P)$.

From a tensor (decomposition) we can define a bilinear, but not
necessarily associative, multiplication on $\F^n$.
\[
x\star_Ty := \sum_{i=1}^r (L_i\cdot x)(R_i\cdot y)P_i.
\]
We can write this in terms of the matrices $L,R,P$ as
\[
x\star_Ty = P(Lx \odot Ry),
\]
where $\odot$ is the Hadamard (entrywise) product of vectors.
Hence the overall complexity of computing $x\star_T y$ is determined
by the tensor rank $r$, and the number of operations required to
calculate the three matrix-vector products $Lx$, $Ry$, and $P(Lx\odot
Ry)$. In other words, we seek to optimise the rank of tensor
decompositions and the complexity of the associated {\em straight line
  programs}.

\subsection{Fields and Semifields}

An $\F$-bilinear multiplication $\star$ on $\F^n$ defines a {\it
  presemifield} if $x\star y=0$ if and only if $x=0$ or
$y=0$. Equivalently, every linear equation of the form $a\star x=b$ or
$y\star a=b$ have unique solutions. Note that we do not assume any
associativity or commutativity of $\star$. If $\star$ possesses an
identity element, we call it a {\em semifield}. If it is furthermore
both associative and commutative, it is a field. We refer to
\cite{Lavrauw:2011:finitesem} for background.

Finite semifields have been classified for some small parameters, and many constructions exist; see for example \cite[Table 4]{Combarro:2011:IJCM},  \cite{Rua:2009:sem64}, \cite{Sheekey:2019:MRDsurvey}, \cite{Lavrauw:2011:finitesem}. The study of semifields from the point of view of tensors was initiated in \cite{Liebler:1981:tensor} and furthered in \cite{Lavrauw:2013:tensor}.

In
\cite{Lavrauw:2022:tranksemi} it was shown that there exist semifields
with lower tensor rank (multiplicative complexity) than the field of
the same order. This suggests that semifields may allow better
performance in applications where the associativity of a field is not
necessary. In this work we consider the overall complexity of small
fields and semifields.

\subsection{Isotopy and Isotropy}

We define an action on tensors via invertible linear maps. Although the definition does not depend on a decomposition, it will be more convenient for us to work directly with a decomposition.

\begin{definition}
Let  \({T=\sum_{i=1}^{r}{L_i}\tensorproduct{R_i}\tensorproduct{P_i}}\), and let $(X,Y,Z)$ be a triple of invertible linear maps. Then we define
\[
T^{(X,Y,Z)} := \sum_{i=1}^r(L_iX)\otimes (R_iY)\otimes (ZP_i)
\]
Two tensors $T$ and $S$ are {\em isotopic} if there exists a triple $(X,Y,Z)$ such that $S=T^{(X,Y,Z)}$. A triple $(X,Y,Z)$ is an {\em isotropy} of $T$ if $T^{(X,Y,Z)}=T$.
\end{definition}

Note that $T(L,R,P)^{(X.Y,Z)}= T(LX,RY,ZP)$. A tensor $T$ defines a (pre)semifield if and only if $T^{(X.Y,Z)}$ does too. Isotopy is one of the natural notions of equivalences for semifields and tensors, along with the action of $S_3$ permuting the factors.

The methods of \cite{Karstadt:2017aa} includes the possibility of using an isotopy to exchange $T$ for another tensor with smaller additive complexity; savings are made when a multiplication is used recursively, since the change of bases $(X,Y,Z)$ contribute less. However for the case of a fixed algebra, this will always be more expensive, and so we need to consider the additive complexity of a fixed algebra rather than utilizing an isotopy. We can utilize isotropies to our advantage.

\subsection{\LRP representation and straight-line
  programs}\label{sec:LRP-SLP}

While the multiplicative complexity of an algebra is invariant up to
isotopy, the additive complexity is not. We must consider the \LRP
representations and optimize the number of operations required. That
is, we seek a {\em straight line program} (\SLP) of shortest length.

\begin{example}\label{ex:Karatsuba}
For instance, denoting by $\odot$ the Hadamard product,
the following version of Karatsuba's algorithm for the
multiplication of degree $1$ polynomials, with $3$ base field multiplications,
$c_0+c_1X+c_2X^2=(a_0+a_1X)(b_0+b_1X)=a_0b_0+(a_0b_0+(a_0-a_1)(b_1-b_0)+a_1b_1)X+a_1b_1X^2$,
is represented by~\cref{eq:LRPKaratsuba}.
\begin{equation}\label{eq:LRPKaratsuba}
\begin{bmatrix}c_0\\c_1\\c_2\end{bmatrix}=
\underbracket{\begin{bmatrix}1&0&0\\1&1&1\\0&0&1\end{bmatrix}}_{P}
\left(
\underbracket{\begin{bmatrix}1&0\\1&-1\\0&1\end{bmatrix}}_{L}
\begin{bmatrix}a_0\\a_1\end{bmatrix}
\odot
\underbracket{\begin{bmatrix}1&0\\-1&1\\0&1\end{bmatrix}}_{R}
\begin{bmatrix}b_0\\b_1\end{bmatrix}
\right)
\end{equation}
\end{example}

The \plinopt
software~\cite{jgd:2026:plinopt,jgd:2024:accurate,Dumas:2025ab},
produces straight line programs (\SLP) optimizing its arithmetic cost
(additions and scalar multiplications) by several techniques of common
sub-expression elimination~\cite[Alg.~1]{Dumas:2025ab}, together with
a kernel optimization~\cite[Alg.~2]{Dumas:2025ab}.

We improve here this kernel optimization in~\cref{alg:hybridkernel}.
The idea of~\cite[Alg.~2]{Dumas:2025ab} is to separate the output
computations into two blocks. Starting from the matrix, select a set
of independent rows and a set of rows depending on the first set.
First, produce an optimized \SLP for the first set; this \SLP computes
a subset of the ouput. Then second, instead of optimizing the second
set, optimize their expression as linear combinations of the first
subset of the ouput. If the linear dependencies are much sparser or
structured that the initial rows, then this second \SLP can be faster.
\plinopt then searches for the best random selection of the rows in
the first set.
We propose here two improvements on this technique:
\begin{itemize}
\item Allow extra standard basis vectors in the first set: they do not
  need any operations to be computed, and allow the second set rows to
  be expressed with \emph{combinations} of the input and of the first
  ouput (while the original algorithm expressed them with only the
  input or only the first output); in particular the method now
  applies even if the initial matrix is not full rank, or has more
  columns than rows;
\item Allow the first group to have more rows than the rank of the
  matrix: again this increases the number of possibilities but widens
  the search area.
\end{itemize}
These techniques are formalized in~\cref{alg:hybridkernel} and have
been proven useful for finite fields and semifields (for instance it
was able to find an  \SLP in \cref{sssec:F243} with one operation less
than what~\cite[Alg.~2]{Dumas:2025ab} previously found).
\Cref{alg:hybridkernel} now also allows to apply the Kernel technique
to matrices with more columns than rows.

\begin{algorithm}[ht]
\caption{Hybrid randomized kernel decomposition of a matrix}\label{alg:hybridkernel}
\begin{algorithmic}[1]
\REQUIRE{\(\mat{M}\) in~\(\F^{m{\times}n}\) such that~\({r=\MatrixRank{\mat{M}}}\).}
\ENSURE{A straight line program
  computing~\(\vec{u}\leftarrow{\mat{M}}{\cdot}\vec{v}\) for~$\vec{v}\in\F^n$.}

\STATE Randomly select~$k\in{0..r}$;
\STATE $\mat{G}\in\F^{k{\times}n}\leftarrow$ random selection of~$k$
independent rows of~$\mat{M}$;
\STATE $\mat{H}\in\F^{(n-k){\times}n}\leftarrow$ random selection
of~$n-k$ distinct vectors of the canonical basis of
dimension~$n$, independent of~$\mat{G}$;
\STATE Randomly select~$j\in{0..(m-k)}$;
\STATE $\mat{S}\in\F^{j{\times}n}\leftarrow$ random selection of $j$ rows
of $\mat{M}\setminus{\mat{G}}$;
\STATE $\mat{R}\in\F^{(m-j-k){\times}n}\leftarrow{\mat{M}\setminus(\mat{G}\cup{\mat{S}})}$;
\hfill\COMMENT{$\mat{M}=\begin{smatrix}\mat{G}\\\mat{R}\\\mat{S}\end{smatrix}$,
up to row-perm.}
\STATE $P_1\leftarrow$ \SLP for $\begin{smatrix}\mat{G}\\\mat{R}\end{smatrix}$;
\hfill\COMMENT{Via~\cite[Alg.~1]{Dumas:2025ab}}
\STATE Solve $\mat{S}=\mat{X}\begin{smatrix}\mat{G}\\\mat{H}\end{smatrix}$,
for~$\mat{X}\in\F^{j{\times}n}$;
\STATE $P_2\leftarrow$ \SLP for $\mat{X}$;
\hfill\COMMENT{Via~\cite[Alg.~1]{Dumas:2025ab}}
\RETURN $P_1$ followed by $P_2$ applied on:
\begin{itemize}
\item the first $k$ output of $P_1$; \hfill\COMMENT{computing ${\mat{G}}{\cdot}\vec{v}$}
\item the $n-k$ values of $\vec{v}$ corresponding to $H$.
\end{itemize}
\end{algorithmic}
\end{algorithm}

\subsection{Subspace of matrices from tensors and decompositions}

From an \LRP representation of a $3$-tensor $T$ we can define subspaces of $2$-tensors (or matrices) in some natural ways. We identify $L_i\otimes R_i$ with the rank-one matrix $L_i^{\intercal}R_i$.
\begin{align*}
    W(L,R)&:= \langle L_i\otimes R_i:i\in [1,r]\rangle,\\
    U(L,R,P) &:= \left\langle \sum_{i=1}^r P_{ji} (L_i\otimes R_i):j\in [1,n]\right\rangle.
\end{align*}

Clearly we have that
\[
U(L,R,P)\leq W(L,R)\leq \F^{n\times n},
\]
Note that $U(L,R,P)$ does not depend on the particular \LRP
representation, and so we can define $U(T)$. Not that this is one of
the {\em contraction spaces} of the tensor $T$. A tensor $T$ defines a
field or (pre)semifield multiplication if and only if $U(T)$ is
$n$-dimensional and every nonzero element of $U(T)$ is invertible;
such a space is called a {\it semifield spread set}. Note that this
notion coincides with that of a {\it maximum rank-distance (MRD) code}
of minimum distance~$n$.

On the other hand, the space $W(L,R)$ does depend on the particular
representation, and it is well known that the tensor rank can be
computed by finding a space $W$ spanned by rank one tensors of
smallest dimension such that $U(T)\leq W$. From such a space $W$ we
can construct $L$ and $R$ by choosing a basis $\{B_i\}$ of rank one
matrices for $W$, letting $L_i^{\intercal}$ (resp. $R_i$) be a basis element for
the column (resp. row) space of $B_i$, and $P_j$ the coordinate vector
of the $j$-th element of a basis of $U(T)$ with respect to $\{B_i\}$.

Note also that for any invertible $X,Y,Z$ we have $U(LX,RY,ZP)=
X^{\intercal}U(L,R,P)Y$, and $W(LX,RY)=X^{\intercal}W(L,R)Y$. In this case, we say that $U(LX,RY,ZP)$
and $U(L,R,P)$ are {\it equivalent}. Furthermore, for any $r\times r$
permutation matrix $\sigma$ and any invertible $r\times r$ diagonal
matrices $D_1,D_2$ we have $W(\sigma D_1L,\sigma D_2R)=W(L,R)$, since
the effect of $\sigma$ is to permute the spanning set $L_i\otimes
R_i$, and multiplication by a diagonal matrix multiplies $L_i\otimes
R_i$ by a scalar. Hence if $L'=\sigma D_1LX$ and $R'=\sigma D_2LY$,
then $W(L,R)$ and $W(L',R')$ are equivalent.

\section{Multiplicative complexity}\label{sec:trank}
\subsection{Tensor ranks}

The following theorem of Brockett and Dobkin \cite{brockett}, in combination with data on the best known linear codes from \cite{Grassl:codetables}, gives lower bounds on the tensor rank of any field or semifield. If $T$ is the tensor of multiplication of a field or semifield of order $q^n$, then each of the matrices $L^{\intercal},R^{\intercal},P$ from a rank $r$ decomposition of $T$ must be generator matrices for an $[r,n,n]_q$-linear code. Let $A_q(n,n)$ denote the shortest length of a linear code of dimension $n$ and minimum distance at least $n$ in the Hamming metric over $\F_q$.

\begin{theorem}[\cite{brockett}]\label{thm:coding}
    The tensor rank of a field or semifield of order $q^n$ over $\F_q$ is at least $A_q(n,n)$.
\end{theorem}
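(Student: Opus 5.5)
Given a rank-$r$ decomposition $T=T(L,R,P)$ of the multiplication tensor of a (pre)semifield of order $q^n$, I would show that $L^{\intercal}$, $R^{\intercal}$ and $P$ are each generator matrices of an $[r,n,\geq n]_q$ code. Then $r\geq A_q(n,n)$ follows directly from the definition of $A_q(n,n)$. By the $S_3$-symmetry of the semifield condition (the dual and transposed tensors of a presemifield are again presemifields, via Knuth's cubical array), it would even suffice to treat one factor. But it is just as easy to argue directly for $L$ and $R$ and then handle $P$ by a separate argument.

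\textbf{Step 1 ($L$ and $R$).} Fix nonzero $x\in\F^n$ and consider the codeword $Lx\in\F^r$. Suppose it has weight $w<n$, and let $S=\{i: L_i\cdot x\neq 0\}$ with $|S|=w$. For every $y$,
\[
x\star_T y=\sum_{i\in S}(L_i\cdot x)(R_i\cdot y)P_i .
\]
The linear map $y\mapsto x\star_T y$ therefore factors through $y\mapsto (R_i\cdot y)_{i\in S}\in\F^w$, so it has rank at most $w<n$. Hence some nonzero $y$ gives $x\star_T y=0$, contradicting the presemifield property. So every nonzero codeword $Lx$ has weight at least $n$. Applying this with $x$ in a basis also shows that $L$ is injective, so the code has dimension $n$. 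The same argument with the roles of $x$ and $y$ swapped handles $R$.

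\textbf{Step 2 ($P$).} This is the main obstacle, because here the code is $\{z^{\intercal}P: z\}$, i.e.\ generated by the rows of $P$, and the weight condition concerns linear functionals on the output. Take nonzero $z\in(\F^n)^*$ with $S=\{i: zP_i\neq0\}$ of size $w<n$. Then
\[
z(x\star_T y)=\sum_{i\in S}(zP_i)(L_i\cdot x)(R_i\cdot y).
\]
Read this as a bilinear form in $x,y$; its Gram matrix $\sum_{i\in S}(zP_i)L_i^{\intercal}R_i$ has rank at most $w<n$. So there exists $x\neq0$ with $z(x\star_T y)=0$ for all $y$. But $y\mapsto x\star_T y$ is a bijection of $\F^n$, so $z$ vanishes on all of $\F^n$, i.e.\ $z=0$, a contradiction. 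This is exactly the statement that the contraction space $U(T)$ consists of invertible matrices. Hence $P$ has rank $n$ and every nonzero codeword $zP$ has weight at least $n$.

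\textbf{Conclusion.} Each of the three matrices generates an $[r,n,\geq n]_q$ linear code. By minimality of $A_q(n,n)$ we get $r\geq A_q(n,n)$. In particular this holds for the minimal $r$, namely the tensor rank, since semifields are presemifields.
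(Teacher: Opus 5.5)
Your proof is correct and follows the route the paper indicates. The paper cites Brockett--Dobkin and justifies the bound only by the remark that $L^{\intercal}$, $R^{\intercal}$ and $P$ from any rank-$r$ decomposition generate $[r,n,\geq n]_q$ codes, and your Steps 1 and 2 supply exactly the rank-counting argument behind that remark (Step 2 amounting to the invertibility of nonzero elements of $U(T)$); note that any one of the three codes already suffices to get $r\geq A_q(n,n)$.
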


When $q\geq 2n-1$, we have $A_q(n,n)=2n-1$, and the tensor rank of the field $\F_{q^n}$ is $2n-1$. When $q$ is small, the question remains open in general.

Here we tabulate the  known results on tensor ranks over $\F_2$ and $\F_3$. For extensions of degree $3$, the question was fully resolved in \cite{Lavrauw:2013:3x3x3}. For polynomial multiplication over $\F_2$, see \cite{Chen:2025:Flip}.

\begin{table}[htb]\centering
\caption{Table of known tensor ranks}
\label{tab:known}
\begin{tabular}{lcc}
\toprule
Set & t. rk & reference \\
\midrule
$\F_{2^4}$ & 9 & \cite{karatsuba:1963:multiplication},\cite{Chudnovsky:1988:complex}\\
$\SF_{2^4}$ \#2, \#3 & 9 & \cite{Lavrauw:2022:tranksemi}\\
\hline
$\F_{3^4}$,$\SF_{3^4}$ \#9  & 9 & \cite{karatsuba:1963:multiplication},\cite{Lavrauw:2013:tensor}\\
$\SF_{3^4}$ \#1,$\ldots$,\#8,\#10,\#11 & 8 & \cite{Lavrauw:2022:tranksemi}\\
\hline
$\F_{2^5}= \SF_{2^5}$ \#5  & 13 & \cite{Montgomery:2005:five}\\
 $\SF_{2^5}$ \#1,$\ldots$,\#4,\#6 & $ 13$ & Section \ref{sec:S32}\\
\hline
$\F_{3^5}$  & 11 & \Cref{thm:f243} \\
$\SF_{3^5}$ \#2, \#4 & 10 & \Cref{thm:veronese243} \\
$\SF_{3^5}$ \#3, \#5,\#6 & 11 & \Cref{thm:veronese243} \\
$\SF_{3^5}$ \#7,$\ldots$,\#12 & $\geq$11 & \Cref{thm:veronese243} \\
\hline
$\F_{2^6}$  & 15 & \cite{Grassl:codetables},\cite{Cenk:2010:ffmulcurve}\\
$\F_{3^6}$  & $\in [12,15]$ & \cite{Grassl:codetables},\cite{Cenk:2010:ffmulcurve}\\
\hline
$\F_{2^7}$  & $\in [18,22]$ & \cite{Grassl:codetables},\cite{Cenk:2010:ffmulcurve}\\
\hline
$\F_{2^8}$  & $\in [20,24]$ & \cite{Grassl:codetables},\cite{Cenk:2010:ffmulcurve}\\
\bottomrule
\end{tabular}
\end{table}

\subsection{Fields and semifields of order $243$}

There are $12$ isotopy classes of finite semifields of order $243$ \cite{Rua243}.
By~\cref{thm:coding}, the tensor rank of any such algebra is at least
$A_3(5,5)=10$. Moreover, there is a unique equivalence class of
$[10,5,5]_3$ codes \cite{vanEupen}; let $C$ denote one such code, and
$G$ a generator matrix for $C$. Hence any rank $10$ decomposition of a
corresponding tensor must have that $L^{\intercal},R^{\intercal},P$ are generator matrices
for codes equivalent to $C$; that is, $L=\sigma DGX$ for some
permutation matrix $\sigma$, some invertible diagonal matrix $D$, and
some invertible matrix $X$.

\begin{theorem}\label{thm:f243} The tensor rank of the finite field with~$3^5$
  elements is~$11$.
\end{theorem}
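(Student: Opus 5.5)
We prove the two bounds separately. For the upper bound $11$, we exhibit an explicit decomposition. One natural route is to use that $\F_{3^5}\cong\F_3[X]/(f)$ with $f$ irreducible of degree $5$ (say $f=X^5-X+1$). Products of degree-$4$ polynomials can be computed by evaluation/interpolation at the points of $\mathbb{P}^1(\F_3)$, and at places of degree~$2$ or at higher-order multiplicity points, followed by reduction modulo~$f$. Reduction modulo $f$ is a linear map, so it is absorbed into the post-matrix~$P$. Concretely, we would look for a Chudnovsky-type or CRT-type construction: multiply modulo a product of small moduli (e.g.\ $X$, $X-1$, $X+1$, $X^2$ at infinity, and an irreducible quadratic), whose total degree is at least $9$ and whose local ranks sum to $11$. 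Failing a clean construction, we would search with \plinopt for $L,R,P\in\F_3^{11\times 5}$ such that $P(Lx\odot Ry)$ equals the field product. Verifying such an \LRP triple is a finite check.

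For the lower bound, \cref{thm:coding} gives rank at least $A_3(5,5)=10$, so it suffices to rule out rank $10$. Suppose $T$ is the multiplication tensor of $\F_{3^5}$ with a rank-$10$ decomposition. By uniqueness of the $[10,5,5]_3$ code, $L=\sigma D_1GX$ and $R=\sigma D_2GY$. By the remark at the end of \cref{sec:prelim}, $W(L,R)$ is then equivalent to $W_0:=W(G,G)=\langle G_i^{\intercal}G_i\rangle$, a fixed $10$-dimensional space of symmetric rank-one matrices. Hence $U(T)$ would be equivalent to a $5$-dimensional subspace of $W_0$ in which every nonzero element is invertible, i.e.\ a semifield spread set inside $W_0$.

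So the task reduces to classifying all $5$-dimensional subspaces $U\le W_0$ whose nonzero elements all have rank $5$, and checking that none is equivalent to the spread set of $\F_{3^5}$. This is a finite computation. We enumerate $5$-subspaces of the $10$-dimensional space $W_0\cong\F_3^{10}$, reduced modulo the stabilizer of $W_0$ (including the automorphism group of $C$). We prune early by extending partial subspaces only while all nonzero elements stay nonsingular. Each surviving space is then tested against the field's spread set via a field-invariant, for instance the nuclei dimensions or the orbit structure of determinant values. Since every element of $W_0$ is symmetric, each surviving $U$ consists of symmetric matrices. This suggests a shortcut: a spread set equivalent to that of $\F_{3^5}$ and lying in $X^{\intercal}(\text{symmetric})Y$ corresponds to a commutative isotope, and these surviving spaces should match the Veronese-type semifields \#2 and \#4 of \cref{thm:veronese243} rather than the field. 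I expect this exhaustive classification, and cleanly distinguishing the surviving spread sets from the field (either by computer with invariants, or by a structural argument on the symmetric spaces), to be the main obstacle. The upper-bound construction should be routine once found.
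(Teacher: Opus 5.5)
\textbf{The lower bound has a gap.} Your reduction fails at the step ``$L=\sigma D_1GX$ and $R=\sigma D_2GY$''.

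Uniqueness of the $[10,5,5]_3$ code only tells you that $L^{\intercal}$ and $R^{\intercal}$ \emph{separately} generate codes equivalent to $C$. So $L=\sigma_1D_1GX$ and $R=\sigma_2D_2GY$, with two independent permutations. Reindexing the rank-one terms simultaneously lets you take $\sigma_1$ trivial. You are then left with $W(L,R)=X^{\intercal}W(G,\tau G)Y$ for an arbitrary $\tau=\sigma_1^{-1}\sigma_2\in S_{10}$. Which coordinate of the $L$-code is multiplied against which coordinate of the $R$-code is part of the data of the decomposition. The remark at the end of \cref{sec:prelim} applies only when the \emph{same} permutation acts on both factors.

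By forcing $\tau=\mathrm{id}$, you treat only decompositions for which $W(L,R)$ is equivalent to a space of symmetric matrices, i.e.\ those isotopic to one with $L=R$. This is essentially the partially symmetric situation of \cref{thm:veronese243}. Commutativity of $\F_{3^5}$ gives no reason why a minimal decomposition should have this form.

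The paper's proof instead ranges over all spaces $W(G,\sigma G)$, $\sigma\in S_{10}$. It cuts these down by the automorphisms of $C$ to $2520$ representatives, and runs the exhaustive spread-set search and isotopy classification in each one. The field never occurs; only \#2 and \#4 do. Repairing your argument means doing the same. Your ``symmetric, hence commutative isotope'' shortcut is then lost, since $W(G,\tau G)$ is in general not equivalent to a space of symmetric matrices.

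\textbf{The upper bound.} Your concrete CRT recipe cannot give $11$. The moduli you list have total degree $1+1+1+2+2=7$, not $9$. More generally, any CRT/evaluation scheme on $\mathbb{P}^1(\F_3)$ of total degree at least $9$ costs at least $12$. To see this, apply the Alder--Strassen bound $2e\deg p-1$ to each local factor $\F_3[X]/(p^e)$, and note that at most $6$ distinct places fit into degree $9$. The rank-$11$ formula is not of this type. It comes from the computer search of Barbulescu et al., which the paper simply cites; an explicit instance is in \cref{app:F243}. So you should rely on that known formula, or on your fallback search, rather than on a CRT construction.
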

\begin{proof}
From \cite{Barbulescu:2012:bilmaps}, the tensor rank of $\F_{3^5}$ is at most $11$. So it remains to prove that the tensor rank cannot be $10$.

We construct spaces $W(G,\sigma G)$ for $\sigma\in S_{10}$. However we can reduce our considerations further, since if $\sigma DGX=G$, then $W(G, G)$ and $W(G,\sigma G)$ are equivalent. So we must check only representatives for the cosets of a subgroup of $S_{10}$. Direct calculation with MAGMA gives that we need to check only $2520$ representatives, a reduction by a factor of $720$.

For each such space, we exhaustively search for semifield spread sets using a modification of the methods of \cite{Rua:2009:sem64}, and classify them up to isotopy. We find no occurrences of the field $\F_{3^5}$, proving the claim.
\end{proof}

Next we adapt the algorithm of \cite{Barbulescu:2012:bilmaps} to the
cases of finite semifields. This algorithm searches for \LRP
decompositions where $L=R$. We refer to this as the {\em partially
  symmetric tensor rank}. Necessarily any semifield with such a
decomposition must be commutative, and so we apply it to those
labelled $\#1,\cdots\#7$ of \cite{Rua243}.

\begin{theorem}\label{thm:veronese243}
    The partially symmetric tensor ranks of the commutative semifields with 243 elements are as listed in~\cref{tab:sf243bdez}.
\end{theorem}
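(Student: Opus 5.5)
The plan is to compute, for each commutative semifield $\#1,\ldots,\#7$ of \cite{Rua243}, an upper bound by exhibiting a decomposition with $L=R$ and a matching lower bound by exhaustive search. We work throughout with the commutative contraction space $U(T)$, a $5$-dimensional semifield spread set of symmetric $5\times 5$ matrices over $\F_3$. A partially symmetric decomposition of rank $r$ is then the same thing as a space $W(L,L)$ spanned by $r$ symmetric rank-one matrices $L_i^{\intercal}L_i$ with $U(T)\le W(L,L)$. This is the setting of the algorithm of \cite{Barbulescu:2012:bilmaps}, which we adapt by replacing the target field tensor with the semifield tensor.

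\textbf{Lower bound.} By \cref{thm:coding}, $L^{\intercal}$ generates an $[r,5,5]_3$ code, so $r\ge A_3(5,5)=10$. Rank $10$ forces $L=\sigma DGX$ with $G$ the generator of the unique $[10,5,5]_3$ code \cite{vanEupen}. Up to isotopy (replace $T$ by $T^{(X,X,Z)}$, which preserves partial symmetry) and up to $\sigma,D$ (which fix $W$), the only candidate space is $W_0=W(G,G)$. Over $\F_3$ the diagonal scalings act by squares, so $W(DG,DG)=W(G,G)$ and no sign issues arise.

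We then enumerate all $5$-dimensional subspaces of $W_0$ whose nonzero elements are all invertible, using the spread-set search of \cite{Rua:2009:sem64} as in the proof of \cref{thm:f243}. Each spread set found is classified up to isotopy, for instance by nuclei or other invariants. The semifields that appear (we expect $\#2$ and $\#4$) have partially symmetric rank exactly $10$. The others have rank at least $11$.

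\textbf{Rank $11$ and beyond.} Upper bounds come from running the adapted \cite{Barbulescu:2012:bilmaps} search. It enumerates sets of $r$ points on the Veronese-type variety of symmetric rank-one matrices spanning a space containing $U(T)$, and returns explicit $L,P$, which we verify directly. To prove a lower bound of $12$ where needed, we repeat the exhaustive search over length-$11$ candidates. Here $L^{\intercal}$ must generate an $[11,5,\ge5]_3$ code. These codes are classified, which lets us reduce to few code classes and again quotient by permutations and scalings stabilising the code.

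\textbf{Main obstacle.} The hard part is the $r=11$ (and, if needed, $r=12$) exhaustive searches. They involve many code classes, many non-equivalent subspaces $W$, and an $11$-dimensional ambient space in which to find spread sets. To keep this tractable, we would first exploit the automorphism group of each code to cut the number of subspaces $W$. We would then prune the spread-set search by requiring invertibility on a growing partial basis, using the symmetric form to cut the determinant checks in half.
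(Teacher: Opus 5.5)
Your argument is correct, but your lower bounds take a different route from the paper. In the paper, all of \cref{tab:sf243bdez} comes from one uniform computation. The exhaustive algorithm of \cite{Barbulescu:2012:bilmaps}, adapted to the semifield tensors, is run on each of $\#1,\dots,\#7$. It enumerates all sets of symmetric rank-one forms of a given size whose span contains $U(T)$. So the absence of solutions at size $r-1$ is the lower bound, and the solutions at size $r$ (counted in the last column) give the upper bound.

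You keep that algorithm only for the upper bounds. For the lower bounds you use the coding-theoretic reduction of \cref{thm:f243}, which becomes much cheaper when $L=R$. Because $\sigma$ and $D$ act identically on both factors, the $2520$ spaces $W(G,\sigma G)$ collapse to the single space $W(G,G)$. This is the same idea as the paper's Golay-code argument for order $3^6$.

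What your route gains is an independent, more structural certificate that identifies at once all commutative classes of partially symmetric rank $10$. What it costs:
\begin{itemize}
\item For the entry $12$ of $\#7$ you need an external classification of $[11,5,\ge 5]_3$ codes, plus a spread-set search for each class.
\item It does not produce the solution counts or the full lists of optimal formulae, which the paper later exploits (e.g.\ trying all $121$ formulae for $\SF_{3^5}$~\#2).
\end{itemize}

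Three minor points:
\begin{itemize}
\item Your ``main obstacle'' is overstated. With $L=R$, each code class gives exactly one space $W(G,G)$; code automorphisms only prune the spread-set search inside it.
\item You may restrict to projective codes, because $L_i=\pm L_j$ lets two terms merge and lowers the rank.
\item Transferring ``rank $10$'' from a spread set found in $W(G,G)$ to the tabulated representative needs an isotopy of the form $(X,X,Z)$. This holds here: a commutative semifield of order $3^5$ is either the field or has middle nucleus $\F_3$, and in both cases an isotopy between commutative semifields can be taken of that form. It is also covered anyway by your explicit decompositions from the search.
\end{itemize}
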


\begin{table}[htb]\centering
\caption{Partially symmetric tensor rank of semifields with $243$ elements via \cite{Barbulescu:2012:bilmaps}}
\label{tab:sf243bdez}
\begin{tabular}{lcc}
\toprule
Set & p. symm. t. rk & \# solutions (bases) \\
\midrule
$\F_{3^5}$ & 11 & 121\\
$\SF_{3^5}$ \#2 & 10 & 11\\
$\SF_{3^5}$ \#3 & 11 & 121\\
$\SF_{3^5}$ \#4 & 10 & 1\\
$\SF_{3^5}$ \#5 & 11 & 22\\
$\SF_{3^5}$ \#6 & 11 & 31\\
$\SF_{3^5}$ \#7 & 12 & $1\,310\,980$\\
\bottomrule
\end{tabular}
\end{table}

This does not preclude there existing other semifields of tensor rank $10$; however the following result shows that this does not occur.

\begin{theorem}
There exist precisely two classes of semifields of order $3^4$ of tensor rank~$10$.
\end{theorem}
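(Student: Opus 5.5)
The plan is to follow the same two-sided strategy as in the proof of \cref{thm:f243}, now for $n=4$ and $q=3$. For each of the finitely many isotopy classes of semifields of order $3^4=81$, we need two things. First, a tensor decomposition of rank~$10$ as an upper bound. Second, a certificate that no decomposition of rank~$9$ (and hence none of smaller rank) exists. The lower bound starts from \cref{thm:coding}. If $T$ has a decomposition $T(L,R,P)$ of rank $r$, then $L^{\intercal},R^{\intercal},P$ generate $[r,4,4]_3$ codes. By the Griesmer bound this forces $r\ge A_3(4,4)\geq 4+2+1+1=8$.

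To decide whether rank $r\in\{8,9\}$ occurs, I will enumerate the inequivalent $[r,4,4]_3$ codes from the classification data of~\cite{Grassl:codetables}. For each pair of generator matrices $G_1,G_2$ (up to the coset reduction used in \cref{thm:f243}), we may take $L=G_1$ and $R=\sigma G_2$, quotienting by the stabiliser of the code as there. I then form the spaces $W(L,R)\le\F_3^{4\times 4}$, which have dimension at most $r$. In each such $W$ I search exhaustively, with the methods of~\cite{Rua:2009:sem64}, for $4$-dimensional semifield spread sets $U\le W$, i.e.\ subspaces in which every nonzero element is invertible. Each spread set found is classified up to isotopy, which gives the classes of rank exactly $8$, exactly $9$, and so on. For the upper bound of $10$ on the remaining classes, I will exhibit explicit \LRP decompositions. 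These come either from the partially symmetric search adapted from~\cite{Barbulescu:2012:bilmaps}, as in \cref{thm:veronese243}, or by padding a $9$-term Karatsuba-type decomposition of $\F_{3^4}$ transported through an isotopy.

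The statement is then obtained by comparing the two lists. The classes of tensor rank~$10$ are exactly those admitting a $10$-term decomposition but appearing in none of the rank-$\le 9$ searches. The main obstacle is the size of the rank-$9$ search: there are many inequivalent $[9,4,4]_3$ codes, unlike the unique $[10,5,5]_3$ code used for \cref{thm:f243}. I would first prune by requiring that $W$ contain $4$ linearly independent invertible matrices, and only then run the spread-set search.

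I must flag a consistency problem before committing to the computation. \Cref{tab:known} already records tensor rank $8$ or $9$ for every semifield of order $3^4$, namely $\F_{3^4}$, \#9 and \#1--\#8, \#10, \#11. If those entries are correct, the final comparison yields no class of rank~$10$. I therefore expect the computation to confirm the theorem only if the intended order is $3^5$, where the table lists \#2 and \#4 with rank~$10$; for order $3^4$ as worded it would not.
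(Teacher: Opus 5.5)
Your diagnosis is right: the ``$3^4$'' in the statement is a misprint for ``$3^5$''. You point out the conflict with \cref{tab:known}. For order $3^4$ the field already has rank $9$ via two levels of Karatsuba, and every other class has rank $8$ or $9$. So the literal statement is false and admits no proof. The paper's own proof confirms the intended reading: it refers to the semifields \#2 and \#4 of order $243$ and to the spaces $W(G,\sigma G)$ built from the $[10,5,5]_3$ code.

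What your proposal does not do is give the argument for the intended statement. Instead it plans a computation at order $3^4$ whose outcome you already expect to be empty. The paper's proof is your plan specialised to $n=5$, where it simplifies considerably:
\begin{itemize}
\item By \cref{thm:coding} and $A_3(5,5)=10$, every semifield of order $3^5$ has tensor rank at least $10$, so there is no lower-rank stage.
\item A rank-$10$ decomposition forces $L^{\intercal},R^{\intercal},P$ to generate the unique $[10,5,5]_3$ code. So there is one code class rather than pairs of classes, and $W(L,R)$ is equivalent to $W(G,\sigma G)$ for one of the $2520$ coset representatives $\sigma$. This uniqueness is what keeps the search small, which is exactly the contrast you draw with $[9,4,4]_3$.
\item The exhaustive spread-set search in these spaces, already run for \cref{thm:f243}, finds spread sets isotopic to \#2 and \#4 and to no other class.
\item These spread sets themselves give rank-$10$ decompositions, as do the partially symmetric ones of \cref{thm:veronese243}. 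No separate upper-bound construction is needed.
\end{itemize}

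Separately, one step of your $3^4$ plan is wrong on its own terms. An isotopy carries a decomposition of $\F_{3^4}$ only to tensors isotopic to $\F_{3^4}$. So transporting the Karatsuba decomposition cannot give upper bounds for the other classes.
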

\begin{proof}
Following the proof of~\cref{thm:f243}, we find that spread sets equivalent to those arising from the semifields labelled $\#2$ and $\#4$ do occur as subspaces of $W(G,\sigma G)$, and no others.
\end{proof}

We can also use this setup to give a further proof to the fact that there is no partially symmetric algorithm with $L=R$ to compute multiplication in $\F_{3^6}$. This was established computationally in \cite{Barbulescu:2012:bilmaps}, where the calculation took $4.50\times 10^7$ seconds; here we require no search at all.

\begin{lemma}
  The partially symmetric tensor rank of the field~$\F_{3^6}$, and any
  semifield of order~$3^6$, is at least~$13$.
\end{lemma}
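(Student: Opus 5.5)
The plan is to combine \cref{thm:coding} with the structure of the unique optimal ternary code of dimension~$6$. The argument only looks at the left factor, so it should not need any search.

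First, $A_3(6,6)=12$ (see \cite{Grassl:codetables}). By \cref{thm:coding}, every decomposition of the multiplication tensor of a semifield of order $3^6$, partially symmetric or not, therefore has rank at least $12$. It remains to rule out a partially symmetric decomposition $T=T(L,L,P)$ of rank exactly $12$. In such a decomposition $L^{\intercal}$ generates a $[12,6,6]_3$ code. By the classical result of Pless, a $[12,6,6]_3$ code is unique up to monomial equivalence: it is the extended ternary Golay code $\mathcal{G}_{12}$. Hence $L=\sigma D G X$, with $G^{\intercal}$ a generator matrix of $\mathcal{G}_{12}$, $\sigma$ a permutation matrix, $D$ an invertible diagonal matrix and $X$ invertible. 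I would cite this uniqueness rather than reprove it.

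The key observation is that $\mathcal{G}_{12}$ is self-dual, so $G^{\intercal}G=0$. Every entry of $D$ is $\pm1$, so over $\F_3$ we have $D^2=I$. It follows that
\[
L^{\intercal}L = X^{\intercal}G^{\intercal}D\sigma^{\intercal}\sigma D G X = X^{\intercal}G^{\intercal}D^2 G X = X^{\intercal}(G^{\intercal}G)X = 0 .
\]
Reading this column by column gives $\sum_{i=1}^{12} L_i^{\intercal}L_i = 0$ in $\F^{6\times 6}$. So the twelve symmetric rank-one matrices $L_i\otimes L_i$ spanning $W(L,L)$ satisfy a nontrivial linear relation, and $\dim W(L,L)\le 11$.

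Now $U(T)\le W(L,L)$. Discarding one index $i_0$ from the relation, the $11$ matrices $L_i\otimes L_i$ with $i\neq i_0$ still span $W(L,L)$. Concretely, $L_{i_0}\otimes L_{i_0}=-\sum_{i\ne i_0}L_i\otimes L_i$, so we fold column $i_0$ of $P$ into the others by setting $P'_i=P_i-P_{i_0}$ and deleting row $i_0$ of $L$. This gives a partially symmetric decomposition $T=T(L',L',P')$ of rank $11$. That contradicts the lower bound $A_3(6,6)=12$ of \cref{thm:coding}, so the partially symmetric rank is at least $13$. The argument never uses associativity, so it covers $\F_{3^6}$ and every semifield of order $3^6$ at once.

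The main obstacle is not the algebra, which is short, but the appeal to the coding-theoretic facts: uniqueness of the $[12,6,6]_3$ code and its self-duality. One must check that the monomial freedom does not break $L^{\intercal}L=0$. Over $\F_3$ this holds because all nonzero scalars square to $1$, and this is exactly where the argument is specific to $q=3$.
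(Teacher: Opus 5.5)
Your proof is correct and follows the paper's argument. Both use the uniqueness of the $[12,6,6]_3$ code (the extended ternary Golay code), the relation $\sum_{i=1}^{12}L_i^{\intercal}L_i=0$ that cuts $W(L,L)$ down to dimension at most $11$, and the resulting contradiction with the nonexistence of an $[11,6,6]_3$ code. The only difference is that you derive the relation from self-duality together with $D^2=I$ over $\F_3$, where the paper checks it by direct calculation on one generator matrix; your version makes explicit why the paper may take $L^{\intercal}$ to be any generator matrix of the code.
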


\begin{proof}
  We require that $L^{\intercal}=R^{\intercal}$ and $P$ are generator
  matrices for $[12,6,6]_3$ codes.
  There is a unique $[12,6,6]_3$ code up to equivalence, namely the
  extended ternary Golay code \cite{Pless:1968:Golay}. Thus we may assume that
  $L^{\intercal}$ is any generator matrix for this code;
  for example,
\[
L^{\intercal}=
\begin{smatrix}
1 & 0 & 0 & 0 & 0 & 0 & 0 & 1 & 1 & 1 & 1 & 1 \\
0 & 1 & 0 & 0 & 0 & 0 & 1 & 0 & 1 & 2 & 2 & 1 \\
0 & 0 & 1 & 0 & 0 & 0 & 1 & 1 & 0 & 1 & 2 & 2 \\
0 & 0 & 0 & 1 & 0 & 0 & 1 & 2 & 1 & 0 & 1 & 2 \\
0 & 0 & 0 & 0 & 1 & 0 & 1 & 2 & 2 & 1 & 0 & 1 \\
0 & 0 & 0 & 0 & 0 & 1 & 1 & 1 & 2 & 2 & 1 & 0 \\
\end{smatrix}
\]
Directly calculating we see that $\sum_{i=1}^{12}L_i^{\intercal}L_i=0$, and so
$U(L,L)$ is a subspace of dimension $11$. This would require that $P$
generates an $[11,6,6]_3$ code, which does not
exist~\cite{Grassl:codetables}.
\end{proof}

\section{Additive complexity of linear codes}
\subsection{Shortest straight-line program}\label{sec:SSLP}

Since we know that the matrices occurring in any \LRP decomposition of
a field or semifield of dimension $n$ over $\F_q$ must define an
$[r,n,\geq n]$ code, it will be relevant to know the minimum
complexity of a straight-line program to compute the matrix-vector
product for such a matrix with this property.

  \begin{definition}
      We denote by $\SSLP_q(r,k,d)$ (shortest straight-line
      program) the minimum number of additions and scalar
      multiplications required to compute $Lv$, where $L^{\intercal}$ generates
      an $[r,k,d]_q$ Hamming metric code.
  \end{definition}

Note that for $\F_2$ and $\F_3$ all operations can be achieved by
addition, regarding negations as trivial.

\begin{lemma}\label{lem:minoverall}
    The minimum number of operations needed to compute multiplication
    in a field or semifield of degree $n$ over $\F_q$ using $r$
    multiplications is at least $rM+(3\SSLP_q(r,n,n)+n-r)A$.
\end{lemma}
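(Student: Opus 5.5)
The plan is to analyse an arbitrary bilinear algorithm with $r$ multiplications computing the field or semifield product. By the discussion preceding \cref{thm:coding}, such an algorithm yields an \LRP representation $T=T(L,R,P)$ with $x\star y = P(Lx\odot Ry)$. The first step is to argue that, without loss of generality, the algorithm has exactly this shape: the $r$ multiplications are products of a linear form in $x$ and a linear form in $y$, and the output is a linear combination of these products. This is the standard bilinear normal form, and the tensor decomposition it yields is what the paper counts. The multiplicative cost is then exactly $rM$, and the additive cost splits as the cost of computing $Lx$, $Ry$ and $P z$ for $z=Lx\odot Ry$.

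For the two input maps, note that $L^{\intercal}$ and $R^{\intercal}$ generate $[r,n,\geq n]_q$ codes, as recalled before \cref{thm:coding}. Hence, by the definition of $\SSLP_q(r,n,n)$, each of $Lx$ and $Ry$ costs at least $\SSLP_q(r,n,n)$ additions. Because $x$ and $y$ are independent inputs, no operations can be shared between the two programs, so together they cost at least $2\,\SSLP_q(r,n,n)$.

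The key step is the post-matrix. $P$ is $n\times r$ and $P$ itself generates an $[r,n,\geq n]$ code. I would invoke a transposition (Tellegen) principle for straight-line programs: a linear \SLP with $a$ additions computing $v\mapsto Mv$ for $M\in\F^{m\times k}$ transposes to one computing $u\mapsto M^{\intercal}u$ with $a+m-k$ additions. This holds for programs without zero rows or columns in $M$, which is the case here since the distance is at least $n$. Applying it to $P$, with $m=n$ and $k=r$, the program for $Pz$ transposes to one for $P^{\intercal}w$ whose cost is at least $\SSLP_q(r,n,n)$, since $P^{\intercal\intercal}=P$ generates the code. Therefore computing $Pz$ costs at least $\SSLP_q(r,n,n)-(n-r)\cdot(-1)$, i.e. $\SSLP_q(r,n,n)+n-r$ additions. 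Summing the three contributions gives $rM+(3\SSLP_q(r,n,n)+n-r)A$.

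The main obstacle is making the transposition step rigorous in the additive-count model of the paper, where scalings over $\F_2,\F_3$ are free negations. I would handle it by viewing an \SLP as a linear circuit with fan-in-two addition gates. For $k$ inputs and $m$ outputs in a DAG, reversing the edges converts additions at gates into fan-out and vice versa. The count of additions then changes by exactly $m-k$, by counting edges minus nodes. A secondary point is justifying the normal form, namely that no cheaper non-bilinear or shared computation exists. Over these small fields I would simply restrict, as the paper implicitly does, to algorithms given by \LRP decompositions.
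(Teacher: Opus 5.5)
Your approach is the same as the paper's. You regard $L^{\intercal}$, $R^{\intercal}$ and $P$ as generator matrices of $[r,n,\geq n]_q$ codes, bound each of the three programs by $\SSLP_q(r,n,n)$, and use the transposition principle for the post-matrix. The final arithmetic step, however, is wrong.

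Your form of the principle is correct. A program for $Pz$ with $a$ additions, where $P\in\F^{n\times r}$, transposes to one for $P^{\intercal}w$ with $a+n-r$ additions. So $a+n-r\geq\SSLP_q(r,n,n)$, which gives
\[
a\;\geq\;\SSLP_q(r,n,n)-(n-r)\;=\;\SSLP_q(r,n,n)+r-n .
\]
The extra factor $(-1)$ you inserted is unjustified. Compressing $r$ inputs to $n$ outputs costs $r-n$ \emph{more} than the transposed expansion: compare $v_1+\cdots+v_r$, which costs $r-1$, with copying one value $r$ times, which costs $0$.

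The $n-r$ in the statement is a sign misprint for $r-n$. The paper's one-line proof ("$Pv$ requires $n-r$ operations more than $P^{\intercal}u$") carries the same slip, but every use of the lemma is with $r-n$:
\begin{itemize}
\item the bound $30=8+6+6+(6+8-4)$ for semifields of order $81$;
\item the bound $13M+32A=13M+(3\cdot 8+13-5)A$ in \cref{lem:SSLP1355}.
\end{itemize}
The inequality you wrote holds only because it is weaker than the correct one (as $r\geq n$). It is also too weak for these applications: for $\SF_{3^4}$ it gives $8M+14A$ instead of the tight $8M+22A$. Redo the last step and prove $rM+(3\SSLP_q(r,n,n)+r-n)A$.

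Two smaller points need care.
\begin{itemize}
\item The no-zero-columns hypothesis of the transposition principle is not implied by the distance. $P$ has no zero rows because it has rank $n$. A zero column, however (an unused product, $P_i=0$), is compatible with minimum distance $\geq n$, and each such column lowers the transposed bound by one. You must therefore assume every product is used, as the paper tacitly does.
\item ``$x$ and $y$ are independent, so nothing is shared'' does not prove that a joint program for $(Lx,Ry)$ cannot be cheaper; direct-sum statements for straight-line programs are delicate. It is part of the paper's cost model, with three separate programs for $Lx$, $Ry$ and $Pz$, and you should state it as that assumption.
\end{itemize}
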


\begin{proof}
    This follows from the definition of $\SSLP_q$ regarding
    $\Transpose{L}$, $\Transpose{R}$ and $P$ as generator matrices for codes,
    and from the transposition principle: computing $Pv$ requires
    precisely $n-r$ operations more than computing $P^{\intercal}u$.
\end{proof}

Not that for $r$ large compared to $n$ this can be trivial; for example we have $\SSLP_q(n^2,n,n)=0$, achieved by a repetition code, which occurs as the $L$ and $R$ matrix for the standard algorithm for polynomial multiplication.

For small parameters we can perform a randomized search, via
e.g.~\cref{alg:hybridkernel} and \plinopt, to upper bound the value of
$\SSLP_q(r,n,n)$; exhaustive search is possible only in the very
smallest cases.

\begin{table}[htb]\centering
\caption{Table of some shortest straight-line programs for linear codes}
\label{tab:knownSSLP}
\begin{tabular}{lcc}
\toprule
$[r,k,d]_q$ & $\SSLP$ & reference \\
\midrule
$[8,4,4]_2$ & 6 & \cref{thm:SSLP844}\\
$[9,4,4]_2$ & 5 & \cref{lem:944} \\
$[10,4,4]_2$ & 4 & \cref{lem:10:4:4} \\
$[13,5,5]_2$ & 8 & \cref{lem:SSLP1355} \\
\midrule
$[8,4,4]_3$ & 6 & \cref{thm:SSLP844}\\
$[9,4,4]_3$ & 5 & \cref{lem:944} \\
$[10,4,4]_3$ & 4 & \cref{lem:10:4:4} \\
$[10,5,5]_3$ & $\in [5,12]$ & \cref{no repeated columns} and \cref{tab:multF243}\\
$[11,5,5]_3$ & $\leq{12}$ & \cref{tab:multF243} \\
$[12,5,5]_3$ & $\leq{12}$ & \cref{tab:multF243} \\
$[13,5,5]_3$ & $\leq{11}$ & \cref{tab:multF243} \\
\bottomrule
\end{tabular}
\end{table}

We can search exhaustively or randomly an efficient straight-line program for an $[r,k,d]_q$ code as follows.
\begin{itemize}
    \item Begin with a matrix consisting of the standard basis vectors in $\F_q^k$.
\item Add a row as a sum or difference of two previous rows, or as a repeat of a previous row.
\item Repeat until we obtain a code with minimum distance at least $n$, or until we reach a chosen maximum number $r_{max}\geq r$ of rows.
\item Compute the minimum distance of the code obtained by selecting every multiset of $r$ rows. If this is at least $d$, then $\SSLP_q(r,k,d)\leq r_{max}-k-\#\text{repeated rows}$.
\end{itemize}

Note that it is not necessarily the case that the lowest overall
complexity is achieved by an algorithm of lowest multiplicative
complexity, nor that the existence of an $[r,n,n]_q$ code implies an
algorithm of tensor rank $r$. It is also not necessarily the case that
a linear code with minimal $\SSLP$ can occur in the \LRP representation
of some field or semifield.

\subsection{Calculating $\SSLP_q(r,k,d)$ for small parameters}

In this section we prove some theoretical results on the shortest
straight-line program for certain codes. We start with some general
results.

\begin{lemma}\label{weight n needs n-1}
A vector $v$ of weight $s$ can not be computed with less than $s-1$ additions.
\end{lemma}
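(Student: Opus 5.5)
We argue by induction on the length of the straight-line program, tracking the support of every intermediate value. The setting is that of \cref{sec:SSLP}: the program starts from the standard basis vectors $e_1,\dots,e_k$ (the input coordinates). Each step produces a new linear form that is a sum or difference of two previously computed forms; we may also allow a scalar multiple of a single previous form, and negations are free over $\F_2$ and $\F_3$. Every intermediate quantity is thus a vector $w\in\F_q^k$, namely the coefficient vector of the linear form in the inputs, and we write $\mathrm{supp}(w)$ for its support and $|\mathrm{supp}(w)|$ for its weight.

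The key invariant we would establish is the following: any vector $w$ computed using $a$ additions (or subtractions) satisfies $|\mathrm{supp}(w)|\le a+1$. Here $a$ counts only the additions along the computation of $w$, i.e.\ the additions in the sub-DAG on which $w$ depends; since the full program contains at least those additions, this is enough.

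We prove the invariant by induction over the program.
\begin{itemize}
\item \emph{Base case.} The inputs $e_i$ have weight $1$ and need $0$ additions.
\item \emph{Scaling.} Multiplying by a nonzero scalar, or negating, leaves the support unchanged and adds no additions, so the bound is preserved.
\item \emph{Addition.} If $w=u\pm v$, then $\mathrm{supp}(w)\subseteq\mathrm{supp}(u)\cup\mathrm{supp}(v)$, so $|\mathrm{supp}(w)|\le|\mathrm{supp}(u)|+|\mathrm{supp}(v)|$.
\end{itemize}

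The only delicate point, and the expected obstacle, is counting additions when the computations of $u$ and $v$ share intermediate values, so that their addition counts overlap. We would handle this directly with a union bound on supports. Let $A(w)$ be the set of addition nodes among the ancestors of $w$, including $w$ itself if it is an addition. We claim $|\mathrm{supp}(w)|\le|A(w)|+1$.

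To see this, consider the sub-DAG of ancestors of $w$, consisting of the input leaves and the addition nodes; scalings are contracted away. Each addition node has in-degree $2$ and the sink is $w$. The support of $w$ is contained in the set of input leaves reachable backwards from $w$. A connected DAG whose internal nodes each have at most two in-edges, with $\ell$ leaves and one sink, has at least $\ell-1$ internal nodes. This follows by counting: the leaves are all connected to the single sink $w$, and each binary addition can merge at most two components into one. Hence the number of leaves is at most $|A(w)|+1$, which gives $|\mathrm{supp}(w)|\le|A(w)|+1$.

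Finally, if $v$ has weight $s$, then $s\le|A(v)|+1$, so at least $s-1$ additions are required, which proves the statement.
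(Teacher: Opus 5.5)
Your proof is correct, but it argues differently from the paper.

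\textbf{The paper's route.} It works through the program one addition at a time. It tracks the family $R_l$ of inclusion-maximal supports among the vectors computed so far. It classifies each addition by whether $|R_l|$ stays the same, grows by one, or shrinks by one. It bounds the number of new input coordinates each type can bring in (at most $1$, $2$ and $0$ respectively). Finally, it uses $R_k=\{\mathrm{Supp}(v)\}$ to pair every growing step with a shrinking one.

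\textbf{Your route.} You make one global count on the ancestor sub-DAG of $v$. $\mathrm{Supp}(v)$ lies in that sub-DAG's set of $\ell$ input leaves. The sub-DAG is connected and every addition node has in-degree at most $2$. So your component-merging count gives $I\ge \ell-1\ge s-1$, where $I=|A(v)|$. Equivalently, $2I\ge E\ge V-1=\ell+I-1$ for its edges $E$ and vertices $V$.

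\textbf{What your route buys.}
\begin{itemize}
\item Shared subexpressions and cancellations are harmless, for instance forming $e_1+e_4$ and $e_2-e_4$ and then adding them. Cancellation can only make the support smaller than the leaf set.
\item You never need the paper's rather informal minimality step. That step asserts that in an optimal program every intermediate support lies inside $\mathrm{Supp}(v)$.
\item The argument works verbatim over any field, with steps of the form $\alpha u+\beta w$.
\end{itemize}

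\textbf{What the paper's route buys.} Its step-by-step classification is set up to be reused in \cref{weight n in n-1 behaves well}. Your count handles that equality case just as easily. If exactly $s-1$ additions are used, then $\ell=s$, $E=2I$ and $E=V-1$. So the ancestor sub-DAG is a binary in-tree in which each input is used once and nothing cancels. Hence any two intermediate supports are nested or disjoint. Each intermediate vector is also a scalar multiple of the restriction of $v$ to its own support.
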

\begin{proof}
Assume that while computing $v$, the vectors $v_1,\dots,v_l$ have been
computed using the first $l$ additions.
The supports $S_1,\dots,S_l$ of $v_1,\dots,v_l$ are partially ordered
by inclusion.  For each $l$ define $R_l=\{S_{i_1},\dots,S_{i_t}\}$ as
the set of maximal elements in $S_1,\dots,S_l$.
Assume that $v$ is obtained using $k$ additions, namely $v=v_k$, and
that $k$ is the minimum possible, namely that $v$ can not be obtained
using less than $k$ additions.
In this case, $R_k=\{\textrm{Supp}(v_k)\}$ must hold.
Indeed, if this is not the case, some additions have been used to
compute coordinates that we do not need to compute $v$, and we could
compute $v$ with strictly less than $k$ additions.
Now each addition either increases, decreases, or keeps constant the
size of $R_i$.
Since $\{S_1,\dots,S_l\}\subseteq\{S_1,\dots,S_{l+1}\}$ and only one
further addition occurred, the following are the only possibilities:
\begin{enumerate}
\item $|R_l|=|R_{l+1}|$. In this case, $S_{l+1}$ either contains or is
  contained in one of $\{S_1,\dots,S_l\}$. Using this addition, at
  most one new coordinate is computed. If this was not the case,
  $S_l=\{i,j\}$, where $i,j\notin\cup_{i=1}^l S_i$. Since it cannot be
  that $S_i\subseteq \{i,j\}=S_{l+1}$, $S_{l+1}\in R_{l+1}$, against
  $|R_l|=|R_{l+1}|$.
\item $|R_{l+1}|=|R_l|+1$. In this case, $S_{l+1}$ is not contained in
  any of $\{S_1,\dots,S_l\}$, and has at most two coordinates that are
  not in $\cup_{i=1}^l S_i$.
\item $|R_{l+1}|=|R_l|-1$. In this case, $S_{l+1}$ is the disjoint
  union of two elements $S,S'\in R_l$. Moreover, no new coordinate can
  be computed, namely $\cup_{i=1}^l S_i=\cup_{i=1}^{l+1} S_i$.
\end{enumerate}
Since, $R_k=\{\textrm{Supp}(v_k)\}$, for any addition other than the
first, say the $i$-th, such that $|R_i|=|R_{i-1}|+1$, there must be an
operation, say the $j$-th, such that $|R_j|=|R_{j-1}|-1$.
Let $k_l$ be the number of coordinates computed in some vector until
the addition number $l$, namely $k_l=|\cup_{i=1}^l S_i|$.
Of course, $k_0=0$ and $k_1=2$. Say that after the first addition
there are $a$ further additions of the first type, $b$ of the second,
and $c$ of the third. Then $k_{a+b+c+1}\leq a+2b$. Since $b=c$,
$k_{a+b+c+1}\leq a+b+c$, which proves the claim.
\end{proof}
\begin{corollary}\label{weight n in n-1 behaves well}
If a vector $v$ has weight $s$ and is computed in precisely $s-1$
additions, all the $s-1$ intermediate vectors computed share, up to
the sign, the common non-zero entries.
In other words, the submatrix supported on the intersection of the
supports of any two intermediate vectors has rank one.
\end{corollary}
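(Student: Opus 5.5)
We prove the corollary by running the counting argument of \cref{weight n needs n-1} again, this time in the extremal case where the bound is attained with equality. Suppose $v$ has weight $s$ and is obtained as $v=v_{s-1}$ after exactly $s-1$ additions producing $v_1,\dots,v_{s-1}$. As in that proof, minimality forces $R_{s-1}=\{\mathrm{Supp}(v)\}$. It also forces every coordinate ever touched to lie in $\mathrm{Supp}(v)$: the union $\cup_i S_i$ contains $\mathrm{Supp}(v)$ and has at most $a+b+c+1$ elements. With $a$, $b$, $c$ counting additions of the three types after the first, we have $1+a+b+c=s-1$ and $b=c$. The chain $k\le 2+a+2b=a+b+c+2=s$ is therefore tight.

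Tightness gives the following constraints.
\begin{itemize}
\item Every type-1 addition introduces exactly one new coordinate.
\item Every type-2 addition introduces exactly two.
\item Every type-3 addition merges two maximal supports $S,S'\in R_l$ that are disjoint.
\item No cancellation ever occurs: the support of each sum is exactly the union of the supports of its summands.
\end{itemize}
The key consequence is that the sum of two computed vectors never cancels a coordinate. If cancellation happened at some step, that step would compute strictly fewer new coordinates than its type allows, and the total count would drop below $s$.

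Next we prove the structural claim by induction on $l$. The statement is: for any $i,j\le l$, on $S_i\cap S_j$ the vectors $v_i$ and $v_j$ agree up to a global sign $\pm1$, so the $2\times|S_i\cap S_j|$ submatrix has rank one. The case $v_i$ against an input basis vector is trivial.

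For the induction step, write $v_{l+1}=\pm v_p\pm v_q$, where $v_p,v_q$ are earlier vectors or inputs.
\begin{itemize}
\item Because there is no cancellation and the count is tight, the supports $S_p$ and $S_q$ must be disjoint. An overlap would either waste a coordinate or cause cancellation, which over $\F_2$ and $\F_3$ with unit coefficients is the only alternative to a coefficient $\pm2$. Over $\F_3$ a coefficient $2=-1$ would need separate care; we handle it by noting that it still counts as a "covered" coordinate, so tightness of $k$ excludes it.
\item Hence on $S_{l+1}\cap S_m$, for any earlier $m$, the vector $v_{l+1}$ equals $\pm v_p$ on $S_p\cap S_m$ and $\pm v_q$ on $S_q\cap S_m$.
\item By induction each piece is proportional to $v_m$. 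We must still show the two signs are compatible when $S_m$ meets both $S_p$ and $S_q$.
\end{itemize}

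The sign-compatibility issue is the main obstacle. To resolve it, we use the laminar structure forced by tightness: the supports form a tree under inclusion, since each new support is either a disjoint union of earlier maximal ones or an extension by fresh coordinates. So if $S_m$ meets both $S_p$ and $S_q$, the laminar property gives $S_p\cup S_q\subseteq S_m$ or the reverse. Then $v_m$, restricted to $S_p\cup S_q$, was itself built from the same disjoint pieces with some signs. If those signs differed from those of $v_{l+1}$, we would get two different sign patterns on the same pieces, and we would argue this contradicts minimality: a later combination needing $v$ could then not avoid cancellation.

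I expect the rigorous verification of this laminarity, and of the sign consistency that follows from it, to be the delicate step. I would first try proving laminarity directly from the three-case analysis together with the no-cancellation property.
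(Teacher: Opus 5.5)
Your starting point matches the paper's proof. That proof is only the tightness observation: with exactly $s-1$ additions, each type-(1) step must be $v_i+E_j$ with $j$ a fresh coordinate, and each type-(2) step must be $E_i+E_j$ with $i,j$ fresh, so every nonzero entry is introduced exactly once. Beyond that, however, your proposal asserts or defers exactly the steps that carry the statement, and the justifications you give do not work.

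First, $k_l=|\bigcup_{i\le l}S_i|$ only records which coordinates have ever appeared. A step combining two already computed vectors introduces no new coordinate whether or not it cancels, so tightness of the count cannot rule out cancellation. Nor can it force $S_p\cap S_q=\emptyset$: over $\mathbb{F}_3$, $v_p+v_q$ with $v_p=v_q$ on a nonempty overlap has support exactly $S_p\cup S_q$ (entries $2=-1$ there) and adds no coordinate, just like a disjoint merge. So your ``covered coordinate'' remark does not exclude it. (Also, your bound ``at most $a+b+c+1$ elements'' should read $a+b+c+2=s$.)

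Second, you derive laminarity from ``each new support is a disjoint union of maximal ones or a fresh extension''. This presupposes the disjointness just discussed, and also that only \emph{maximal} supports are ever extended or merged; you show neither, and you acknowledge this is the delicate step.

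Third, the sign-compatibility step is left as ``we would argue this contradicts minimality'', which is not an argument. In fact, once laminarity is known, that case never arises. An earlier $S_m$ meeting both disjoint sets $S_p,S_q$ would have to contain $S_p\cup S_q=S_{l+1}$. Then step $l+1$ would add no coordinate and leave $R_l$ unchanged, contradicting tightness.

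A short operand count settles all three points at once.
By minimality, each of $v_1,\dots,v_{s-2}$ is an operand of some later addition; otherwise delete that step and contradict the lemma. The $s-1$ additions have $2(s-1)$ operand slots, so at most $2(s-1)-(s-2)=s$ of them hold input unit vectors. But every $j\in\mathrm{Supp}(v)$ must occur in some input slot. Hence the input slots are exactly the $s$ distinct $E_j$ with $j\in\mathrm{Supp}(v)$, each used once, and every intermediate vector is used exactly once. The computation is therefore a binary tree with these leaves.

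Consequently $v_i=\sum_{j\in S_i}\pm E_j$, where $S_i$ is the set of leaves below node $i$. No cancellation or coefficient $2$ can occur, since each coordinate enters once. The $S_i$ form a laminar family. If $S_i\subseteq S_m$, then $v_m$ restricted to $S_i$ equals $\pm v_i$, the sign being the product of the signs along the path from $i$ to $m$. This gives the rank-one property for every pair of intermediate vectors whose supports intersect.
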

\begin{proof}
Write, as in the proof of~\cref{weight n needs n-1},
$s-1=1+a+b+c$, where $a,b,c$ are the number of additions of type $1,2$
and $3$, respectively, performed after the first addition.
Then each type $(1)$ addition must effectively compute a new
coordinate, and each type $(2)$ addition must add two new coordinates,
as otherwise, in each case, we would get $\textrm{weight}(v)\leq
s-1$.
This means that a type $(1)$ addition, say at step $l+1$, is of the
form $v_{i}+E_j$, where $i\leq l$, $E_i$ is a vector of the canonical
basis, and the $i$-th coordinate does not belong to
$\cup_{i=1}^l\mathrm{Supp}(v_i)$.
Similarly, a type $(2)$ addition, say at step $l+1$, is of the form
$E_i+E_j$, where $\{i,j\}\cap
(\cup_{i=1}^l\mathrm{Supp}(v_i))=\emptyset$.
\end{proof}
The following results are on $[2k,k,k]_q$ codes, with $q\in\{2,3\}$.
Note that, by the Griesmer bound, if a $[2k,k,k]_q$ code with $q\in\{2,3\}$ exists, then $q=2$ and $k\leq 4$ or $q=3$ and $k\leq 6$.
\begin{lemma}\label{no repeated cols for systematic}
Let $G=(I_k|M)$ be the generator matrix of an $[2k,k,k]_q$ code, where
$I_k$ is the $k\times k$ identity matrix and $q\in\{2,3\}$.
Then, if $M$ has a repeated column, then the columns of $G$ can not be
computed in $2k-3$ additions.
Moreover, no vector of the canonical basis is repeated in $G$.
\end{lemma}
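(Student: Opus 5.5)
The plan is to work with the columns of $G$ as the vectors to be computed: the $k$ columns of $I_k$ are free, and each column of $M$ must appear as an intermediate vector of a straight-line program built by additions (negations being free for $q\in\{2,3\}$). Two kinds of input will be combined. The first is coding-theoretic: information on the weights and multiplicities of columns of $G$ forced by $d=k$. The second is the addition-counting of \cref{weight n needs n-1} and \cref{weight n in n-1 behaves well}.

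\emph{Step 1 (weights of columns of $M$).} For every index $j$, the codeword $e_jG$ has weight $1+\mathrm{wt}(\text{row } j\text{ of } M)\geq k$, so each row of $M$ has at most one zero. From this I expect to deduce that every column of $M$ has weight at least $k-1$, the dual statement to the row bound. For this I would use the dual code: for $q\in\{2,3\}$ and $n=2k$, a $[2k,k,k]_q$ code is close to self-dual by the Griesmer bound. Concretely, I would try to show that $(-M^{\intercal}\mid I_k)$ also generates a code of minimum distance at least $k-1$, or failing that, argue directly with codewords vanishing on the identity positions outside the support of a column.

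\emph{Step 2 (the repeated-column case).} Suppose $M$ has a repeated column $c$. A nonzero codeword $x$ vanishing on $k-1$ chosen columns exists. Choosing both copies of $c$ among the vanishing columns gives $k$ zero positions while imposing only $k-1$ conditions, so there is a codeword of weight exactly $k$ with a prescribed zero set containing the repeated pair. I would then take the residual code with respect to this minimum-weight word, a $[k,k-1,\geq\lceil k/q\rceil]_q$ code, and use the Griesmer bound together with $q\in\{2,3\}$ and $k\le 4$ (resp.\ $k\le 6$) to pin down the weight structure of $M$. The target conclusion is that the distinct columns of $M$ are many and heavy.

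\emph{Step 3 (counting additions).} Let $c^{*}$ be a column of $M$ of maximal weight $s\geq k-1$. By \cref{weight n needs n-1}, computing $c^{*}$ costs at least $s-1\geq k-2$ additions. If exactly $s-1$ additions are used along the chain producing $c^{*}$, then by \cref{weight n in n-1 behaves well} all intermediates agree on common supports up to sign, and they have strictly increasing weight. By Step 1, at most one of these intermediates, the one of weight $k-1$, can be another column of $M$. Every remaining distinct column of $M$ then requires at least one further addition. Adding these up, I aim to show the total exceeds $2k-3$ once the repetition is accounted for through the structure obtained in Step 2. I expect this combination to be the main obstacle, because the naive bound from counting distinct columns is only about $k-1$.

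\emph{Step 4 (the ``moreover'' part).} Suppose a canonical vector $e_j$ appears again as a column of $M$. The same trick gives a weight-$k$ codeword with a doubled zero, and the residual-code argument of Step 2 should show that the corresponding row and column weight constraints of Step 1 fail. Alternatively, this column of $M$ has weight $1<k-1$, contradicting Step 1 directly, provided Step 1 is established for all columns. I would try this direct route first.
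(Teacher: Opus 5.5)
\textbf{There is a genuine gap: the first claim is not proved, and the counting route you take for it cannot work.} Everything hinges on Step~3, which you concede does not close. No sharper counting will close it. Your constraints are: at most one zero in each row and each column of $M$, hence a repeated column of full weight. All of them are met by taking the columns of $M$ to be $c,c,c-e_1,\dots,c-e_{k-2}$, with $c$ the all-one vector. These are produced in $(k-1)+(k-2)=2k-3$ additions. What excludes this configuration is not the count but the minimum distance: rows $k-1$ and $k$ of this $M$ are both all-one, so the corresponding rows of $G$ combine to a codeword of weight $2$.

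That is the missing idea. For the relevant parameters, a repeated column of $M$ is already incompatible with $d=k$, so the first claim holds vacuously and the number of additions plays no role. Accordingly, the paper's proof never counts additions:
\begin{itemize}
\item a repeated column must have full weight, since a zero in it would put two zeros in one row of $M$;
\item every row and every column of $M$ has at most one zero;
\item so $M$ has at most $k-2$ zeros, hence at least two full-weight rows, from which it extracts a codeword of weight below $k$.
\end{itemize}

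\textbf{Your Step~2 already contains a clean, uniform way to finish,} if you aim it at a contradiction instead of at the weight structure of $M$. Restrict the code to the $k$ positions where your weight-$k$ word $w$ vanishes. This residual code is a $[k,k-1,\geq\lceil k/q\rceil]_q$ code in which the repeated coordinates $a,b$ still agree. A hyperplane of $\F_q^k$ contained in $\{x: x_a=x_b\}$ equals it. So the residual code contains $e_l$ for $l\notin\{a,b\}$ and has minimum distance $1$, contradicting $\lceil k/q\rceil\geq 2$ once $k>q$.

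A column of $M$ equal to some $e_j$ is also a repeated column of $G$, so the same argument gives the ``moreover'' part. Steps~1, 3 and 4 then become unnecessary; this is in the spirit of the shortening argument of Lemma~\ref{no repeated columns}.

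If you keep Step~1, replace the ``close to self-dual'' heuristic, which is not an argument, by the direct one. If a column of $M$ vanishes in rows $i,j$, those rows of $M$ are nonzero elsewhere. A suitable combination of rows $i,j$ of $G$ then has weight at most $2+\lfloor (k-1)/2\rfloor$, which is $<k$ for $k\geq 4$ (and the weight is exactly $2$ when $q=2$).

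Finally, say where $k>q$ is used, since some such hypothesis is unavoidable. For $q=3$, $k=3$, take $M$ with columns $(1,1,1)^{\intercal},(1,1,1)^{\intercal},(0,1,2)^{\intercal}$. This gives a $[6,3,3]_3$ code whose columns are computed in $3=2k-3$ additions: $e_1+e_2$, then $+e_3$, and $e_2-e_3$. The paper only applies the lemma with $k=4$.
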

\begin{proof}
The second claim is straightforward to check, so we only prove the
first claim.
Suppose by contradiction that we can compute $G$ using $2k-3$
additions and, without loss of generality, that the first two columns
are equal, $M^1=M^2$.
Since no codeword of $G$ has weight less than $k$, no row of $M$ has
two $0$'s: and there are no indexes $i,j$ such that three columns
of~$M$ have the same coordinates in the rows $i$ and $j$ (otherwise
their difference has weight less than $k$).
Moreover, no column of $M$ has two $0$'s. In fact, if rows $i,j$ of a
column of $M$ are $0$, each of the remaining $k-1$ columns of $M$ will
have non-zero entries in the rows $i,j$, and a linear combination of
rows $i,j$ has weight at most $k-1$.
In particular, since $M^1$ is repeated, it has weight $k$.
Since no row and no column of $M$ can have two zero entries, there are
at most $k-2$ zero entries in the rows of $M$, and therefore at least
two rows of $M$ have weight $k$. A linear combination of the
corresponding rows of $G$ gives a codeword of weight less than $k$, a
contradiction.
\end{proof}
\begin{lemma}\label{systematic 2k,k,k codes}
Let $G=(I_k|M)$ be the generator matrix of an $[2k,k,k]_q$ code, where $I_k$ is the $k\times k$ identity matrix and $q\in\{2,3\}$.
Then the columns of $G$ can not be computed in $2k-3$ additions. Equivalently, computing $G$ requires at least $2k-2$ operations.
\end{lemma}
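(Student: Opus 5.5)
The plan is a counting argument on the additions, followed by an analysis of the extremal case in which every addition is ``tight''. In the extremal case we want to exhibit a forbidden configuration in $M$, using the structural constraints already derived in the proof of \cref{no repeated cols for systematic}. We write the straight-line program as a sequence of vectors $v_1,\dots,v_\ell$ in $\F_q^k$, each obtained as $\pm$ a sum of two earlier vectors or standard basis vectors $E_i$. We suppose, for contradiction, that $\ell=2k-3$ and that every column of $M$ occurs among the $v_j$.

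First I would collect the constraints on $M$. By the proof of \cref{no repeated cols for systematic}, no row and no column of $M$ contains two zeros. Hence every column has weight at least $k-1$. Moreover, no three columns of $M$ agree on the same pair of rows. By the lemma itself we may assume that the columns of $M$ are pairwise distinct, and by its second claim none of them is a standard basis vector. So all $k$ columns must actually be produced by additions.

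Next comes the count. Take the first column $c$ of $M$ to appear in the program. By \cref{weight n needs n-1} it costs at least $\mathrm{wt}(c)-1\geq k-2$ additions. Each of the remaining $k-1$ columns, being distinct, costs at least one further addition. This gives at least $2k-3$ additions, so with exactly $2k-3$ every inequality is tight. Concretely: $c$ has weight $k-1$, it is computed in exactly $k-2$ additions, so \cref{weight n in n-1 behaves well} applies to it, and every later addition outputs a new column of $M$ with no wasted intermediate vectors. In addition, the intermediate vectors of $c$ have weight at most $k-2$, so they cannot themselves be columns.

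The key step is to exploit this rigidity. Each of the remaining $k-1$ columns, of weight at least $k-1$, is produced by a single addition from previously available vectors. These are $c$, other columns, basis vectors $E_i$, or the (rank-one-compatible, by \cref{weight n in n-1 behaves well}) partial sums inside $c$. I would argue case by case on the two summands. A sum with an intermediate vector or with $E_i$ changes $c$, or an earlier column, on few coordinates. A sum of two columns of weight at least $k-1$ must produce heavy cancellation to stay of weight at most $k$ and to be nonzero in every row but one. In each case the new column agrees with an earlier column on at least $k-2$ rows. Collecting $k$ columns that pairwise agree on all but at most two rows, a pigeonhole argument for $k\geq 4$ should yield three columns agreeing on a common pair of rows, which is forbidden. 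I expect this step to be the main obstacle: controlling the ``column $+$ column'' case over $\F_3$, where signs allow more cancellation patterns. There I would instead use directly that the difference of the corresponding two codewords restricted to the relevant rows has weight less than $k$.

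Finally, the small cases are handled by hand. By the Griesmer remark these are $k\leq 3$, together with the finitely many remaining cases $q=2$, $k=4$ and $q=3$, $k\leq 6$. Where the pigeonhole step is too weak, one can check explicitly, since the $[2k,k,k]_q$ codes there are unique or few up to equivalence. The equivalent formulation, that at least $2k-2$ operations are needed, is then immediate.
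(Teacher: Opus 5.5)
\paragraph{Verdict.} Your proposal follows the paper's outline, but the decisive step is not established, and the small cases you defer include ones where the statement is false.

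\paragraph{What matches the paper.} Your setup and the tightness count agree with the paper's, and focusing on the \emph{first} column produced is cleaner than the paper's separate treatment of the all-weight-$k$ case. As you say, the first column $c$ has weight $k-1$ and is produced by exactly the first $k-2$ additions, so \cref{weight n in n-1 behaves well} applies. Each later addition then outputs a new column.

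\paragraph{Where the argument breaks.} First, the claim that every new column agrees with an earlier one on at least $k-2$ rows is asserted, not proved. In your \emph{column $\pm$ column} case over $\F_3$ only $k-3$ rows are guaranteed. Second, even granting the claim, you only get that each new column agrees with \emph{some} earlier column, not that the columns agree pairwise. Two $(k-2)$-subsets of the $(k-1)$-set $\mathrm{supp}(c)$ need only meet in $k-3$ rows. If the third column's agreement is with $M^{(2)}$ rather than with $c$, only $k-4$ rows are forced. For $k=4$ no common pair of rows is forced at all, and for $k=5$ chained agreements force only one row. By Griesmer, the only cases with $k\geq 4$ are $k=4$ for $q=2$ and $4\le k\le 6$ for $q=3$. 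So your fallback to \emph{check explicitly} would be carrying the proof rather than supplementing it.

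\paragraph{The cases $k\le 3$.} These cannot be \emph{handled by hand}, because the statement is false there. For $k=3$, take $E_2+E_3$, $E_1+E_3$, $E_1+E_2$ as the columns of $M$. This gives a $[6,3,3]_q$ code for $q\in\{2,3\}$ computed in $3=2k-3$ additions. You must assume $k\geq 4$, as the paper's argument tacitly does, for instance to exclude $E_i\pm E_j$ as a column of $M$.

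\paragraph{The missing idea.} Exploit the zero of $c$ and examine only the first three columns produced; this is the paper's route, written there tersely.
\begin{itemize}
\item Put the zero of $c$ in row $k$. Since no row of $M$ has two zeros, $M^{(2)}$ has a nonzero $k$-th entry. At step $k-1$ the only available vector with that property is $E_k$. Hence $M^{(2)}=\pm X\pm E_k$, where $X$ is $c$ or an intermediate vector of weight $k-2$.
\item An optimal computation of $c$ is a binary tree on its $k-1$ support coordinates. So the intermediate vectors are restrictions of $\pm c$ to a laminar family of supports, and such a family contains at most one set of size $k-2$. Thus $M^{(2)}$ coincides with $\pm c$ on a member $S$ of this family with $|S|\geq k-2$.
\item Now run through the one-addition possibilities for $M^{(3)}$, with operands among $c$, $M^{(2)}$, the intermediate vectors and the $E_i$:
  \begin{itemize}
  \item Options using neither $E_k$ nor $M^{(2)}$ vanish in row $k$, giving a second zero there.
  \item If $M^{(3)}=\pm X'\pm E_k$, weight forces $X'$ to be $c$ or the intermediate vector of weight $k-2$. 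Then $M^{(3)}$ agrees with $\pm c$ on a member $S'$ with $|S'|\geq k-2$. Laminarity gives $|S\cap S'|\geq k-2\geq 2$, which is exactly what your pigeonhole cannot deliver for $k=4$.
  \item Options involving $M^{(2)}$ either create a second zero in some row or column of $M$, or make $c$, $M^{(2)}$, $M^{(3)}$ proportional on a common pair of rows.
  \end{itemize}
\end{itemize}
Each outcome contradicts the constraints inherited from \cref{no repeated cols for systematic}. The remaining $k-3$ columns never need to be examined.
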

\begin{proof}
Suppose, by contradiction, that we can compute $G$ in $2k-3$ additions.
Arguing as in the proof of \Cref{no repeated cols for systematic}, we
can assume that no row of $M$ has two $0$ entries, that no pair of
distinct indexes $i,j$ such that three distinct columns $M$ have the
same coordinates in the rows $i,j$ exists, and that no column of $M$
has two zero entries.
As a consequence, each column has weight at least $k-1$, and
by~\cref{weight n needs n-1} can be computed in no less than $k-2$
additions.
Assume that each column of $M$ has weight $k$.
Then, by~\cref{weight n  needs n-1}, using the first $k-1$ additions,
at most one column of $M$ is computed, and at least one more addition
is needed to compute each of the remaining $k-1$ columns of
$M$ (all these columns have full weight, but must be all distinct --
by the sign of their elements).
Therefore, at least $k-1+k-1=2k-2$ additions are needed, a
contradiction.
We can therefore assume that at least one column of $M$ has a zero
entry, and that at addition number $k-2$, one column of $M$ has been
computed.
Without loss of generality, write
$P^1,\dots,P^{k-3},P^{k-2}=M^1=\Transpose{\begin{pmatrix}\star&\dots&\star&0\end{pmatrix}}$,
where $\star$ indicates a non-zero scalar, for the $k-2$ vectors
computed with the first $k-2$ additions.
By~\cref{weight n in n-1 behaves well}, all the vectors
$P^1,\dots,P^{k-2}$ share, up to a non-zero scalar multiple, all the non-zero
coordinates.
Since in the first $k-2$ additions the column $M^1$ was the only one
computed, each of the following $k-1$ additions must compute a new
column of $M$.
Since no row of $M$ has two zero entries, the next addition must be
adding a non-zero entry as last coordinate. In particular, $M^1$ and
$M^2$ share, up to the sign, all but one entries.
Each $M^i$ is obtained either combining two vectors among
$P^1,\dots,P^{k-3},P^{k-2}=M^1, M^2$, or combining one of those vector
with a vector of the canonical basis.
In either case, they must share with $M^1$ at least two coordinates.
Now, $M^3$, by the properties above, shares at least two non-zero
entries with $M^1$, and therefore with $M^2$, contradicting the fact
that no distinct indexes $i,j$ exist such that three columns of $M$
have same entries in row $i,j$.
\end{proof}
\begin{lemma}\label{no repeated columns}
The generator matrix $G$ of a $[2k,k,k]_q$ code, where $q\in\{2,3\}$ and $k\geq 4$, has no repeated columns.
\end{lemma}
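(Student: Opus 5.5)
We argue by contradiction through shortening the code and applying the Griesmer bound, the same tool used just before \cref{no repeated cols for systematic} to restrict the possible parameters. Let $C$ be the $[2k,k,k]_q$ code generated by $G$. Suppose two coordinates $i\neq j$ carry equal columns $G^i=G^j$. The argument works equally if the columns are only proportional, which is the relevant notion of repetition over $\F_3$.

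First we dispose of the degenerate case that this common column is zero. Deleting that coordinate gives a $[2k-1,k,k]_q$ code, since no weight is lost. This violates the Griesmer bound, because $\sum_{t=0}^{k-1}\lceil k/q^t\rceil \geq k+\lceil k/q\rceil+(k-2) > 2k-1$ for $k\ge 4$.

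So we may assume $G^i\neq 0$. Consider the subcode $C'=\{c\in C: c_i=0\}$. It has dimension exactly $k-1$, being the kernel of the nonzero functional $x\mapsto xG^i$. Since $G^j$ is a multiple of $G^i$, every $c\in C'$ also has $c_j=0$. Deleting coordinates $i$ and $j$ from $C'$ loses no weight, so we obtain a $[2k-2,k-1,\geq k]_q$ code.

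The key step is to check that such a code cannot exist, again by the Griesmer bound. We need
\[
\sum_{t=0}^{k-2}\left\lceil \frac{k}{q^t}\right\rceil > 2k-2 .
\]
Each of the $k-3$ terms with $t\ge 2$ is at least $1$, so the sum is at least $k+\lceil k/q\rceil+(k-3)=2k-3+\lceil k/q\rceil$. For $q\in\{2,3\}$ and $k\geq 4$ we have $\lceil k/q\rceil\geq 2$, so the sum is at least $2k-1>2k-2$, which gives the contradiction.

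As a sanity check, the remark after \cref{weight n in n-1 behaves well} restricts the cases to $q=2$, $k=4$ and $q=3$, $k\in\{4,5,6\}$. Direct evaluation gives $7>6$ for $k=4$ (with either $q$), $9>8$ for $q=3$, $k=5$, and $11>10$ for $q=3$, $k=6$.

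The only delicate point is justifying that shortening preserves minimum distance and drops dimension by exactly one. This holds because $G^i\neq 0$ and because every codeword of $C'$ vanishes on both coordinates $i$ and $j$. The rest is the numerical Griesmer check above.
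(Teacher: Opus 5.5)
Your proof is correct and follows essentially the same route as the paper: assume a repeated column, shorten $C$ at those two coordinates, and derive a $[2k-2,k-1,\geq k]_q$ code, which cannot exist.

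The only real difference is how you certify that this code does not exist.
\begin{itemize}
\item The paper observes (implicitly via the Singleton bound) that such a code would be MDS. It then invokes the external fact that MDS codes over $\F_2$ or $\F_3$ of dimension at least $3$ have $d\le 2$. This is short and explains conceptually why the code fails to exist.
\item You instead check the Griesmer bound numerically. The paper already uses this bound to restrict the possible $k$, so your argument is more self-contained and fully explicit.
\end{itemize}

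You also handle two points the paper passes over silently. First, the zero-column case: there the paper's ``subcode of co-dimension~$1$'' is not actually of codimension~$1$. Second, columns that are merely proportional, which is the relevant notion over $\F_3$ when negations are free.
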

\begin{proof}
Let $C$ be the $[2k,k,k]_q$ code generated by $G$. Assume to the
contrary that $G$ has two identical columns. Take the {\it shortening}
of this code with respect to these two columns; that is, the subcode
of co-dimension $1$ containing all elements of $C$ with zeroes in
these two positions. Deleting these two positions gives an
$[2k-2,k-1,\geq k]_q$ code. In particular, this is an MDS code.
For $q\in\{2,3\}$, MDS codes of dimension at least $3$ have minimum distance $d\leq 2$, so no such code exists.
\end{proof}

\begin{theorem}\label{th:at least 6 adds}
The generator matrix $G$ of a $[8,4,4]_q$ code, where $q\in\{2,3\}$, requires at least $6$ additions.
\end{theorem}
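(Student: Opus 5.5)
The plan is to reduce to the systematic case and then invoke \cref{systematic 2k,k,k codes} with $k=4$. That lemma says that if $G=(I_4\mid M)$ generates an $[8,4,4]_q$ code with $q\in\{2,3\}$, then computing its columns requires at least $2k-2=6$ operations. The work therefore consists of showing that a general generator matrix, and a general straight-line program for it, can be brought into this form without lowering the operation count.

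First, we reduce to the case where four of the columns of $G$ are standard basis vectors. Any straight-line program computing $Lv$, where $L^{\intercal}=G$, starts from the inputs $e_1,\dots,e_4$, which cost nothing. Each of the $8$ columns of $G$ is either one of these inputs (up to sign, which is free over $\F_2$ and $\F_3$) or the result of at least one addition. By \cref{no repeated columns} (with $k=4$), the columns of $G$ are pairwise distinct, and even distinct up to sign, by the same shortening argument applied to proportional columns. So at most $4$ columns are free, and each of the remaining columns requires its own final addition. Hence at least $8-4=4$ additions are needed. If fewer than four columns of $G$ are canonical vectors, at least $5$ columns need additions, and we have to argue that one more is forced; this is handled below.

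Second, consider the case where exactly four columns are canonical. Since a code has full rank $4$ and the columns are distinct up to sign, these four columns must be $\pm e_1,\dots,\pm e_4$. After permuting columns and rescaling (a monomial equivalence, which does not change the program length), $G=(I_4\mid M)$, and \cref{systematic 2k,k,k codes} gives at least $6$ operations directly.

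Third, and this is the main obstacle, consider the case where at most three columns are canonical. Then at least $5$ distinct non-canonical columns must be produced. Each of them has weight at least $2$, and generically more: a column of weight $1$ is canonical, and columns of low weight are constrained by the minimum distance $4$. With only $5$ additions, every addition would have to output a distinct column of $G$, so no helper intermediates are allowed. I would argue that this is impossible. The first addition produces a weight-$2$ column, and the later ones each combine previous columns and canonical inputs. Using the minimum-distance conditions in the style of the proof of \cref{systematic 2k,k,k codes} (no row combination of low weight, so no three columns agreeing on two coordinates, and no row with two zeros in the relevant part), one shows that the resulting $5$ columns, together with at most $3$ canonical vectors, cannot span a code of distance $4$. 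An alternative I would try first is to apply a change of basis $X$ (isotopy on the input side) that turns $4$ independent computed columns into the identity. This does not preserve program length in general, so the direct combinatorial argument is more reliable. If needed, the remaining small case analysis, $5$ additions producing $5$ columns over $\F_2$ and $\F_3$, is finite and could be checked exhaustively, consistent with the randomized and exhaustive search method described earlier in the paper.
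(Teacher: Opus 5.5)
Your reduction is the same as the paper's. The systematic case is exactly Lemma~\ref{systematic 2k,k,k codes}. Otherwise, since the columns are pairwise distinct up to sign, five additions can only suffice if $G=(I_{4\mid 3}\mid M)$ up to monomial equivalence and each addition outputs a new column $M^1,\dots,M^5$ of $M$, with no auxiliary intermediates. (You should also exclude zero columns, which are free but are neither inputs nor outputs of additions. Deleting one would leave a $[7,4,4]_q$ code, which violates the Griesmer bound for $q\in\{2,3\}$.)

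The gap is that this last configuration, where the whole difficulty lies, is never actually excluded. ``One shows that\dots'' and ``could be checked exhaustively'' are placeholders, so as written you have proved only the systematic case. Weight conditions alone cannot close it either: generator matrices of $[8,4,4]_q$ codes with exactly three canonical columns do exist. The contradiction therefore has to come from combining the weight conditions with the shape of the program.

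Moreover, the conditions you propose to use are copied from the systematic lemma and do not hold here. Now $M$ is $4\times 5$, and rows $1$--$3$ of $G$ each carry one identity entry. So rows $1$--$3$ of $M$ may contain two zeros (only three are excluded), whereas row $4$ carries none and has at most one zero. Also, three columns of $M$ agreeing on two of rows $1$--$3$ only produce a codeword of weight at most $4$, which is no contradiction. The useful combinations are typically row $4$ with one of rows $1$--$3$: these contain a single identity entry, so three cancellations among the five entries of $M$ already give weight at most $3$.

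The paper argues along these lines. The first addition outputs $M^1=\pm e_a\pm e_b$, which has two zeros.
\begin{itemize}
\item If one zero is in row $4$, say $M^1=(\star,\star,0,0)^{\intercal}$, then all other columns are nonzero in row $4$ (normalise these entries to $1$), and row $3$ has at most one further zero. A pigeonhole on the entries of row $3$ then gives a combination of rows $3$ and $4$ of weight at most $3$.
\item Otherwise $M^1=(\star,0,0,1)^{\intercal}$. The paper splits further according to the (at most one) other column vanishing in row $4$, and to how the remaining columns can arise from single additions. Each time it exhibits two rows of $G$ with a combination of weight less than $4$.
\end{itemize}

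When you write this out, two facts are useful. First, a column of $M$ vanishing in row $4$ must have its first three entries all nonzero; otherwise a pigeonhole on the other four columns puts five columns of $G$ in one hyperplane. Second, you must also handle the subcase in which none of $M^2,\dots,M^5$ vanishes in row $4$. Over $\F_2$, rows $2$ and $4$ then give weight at most $3$ at once. Over $\F_3$, the weight conditions alone do not suffice there, and you genuinely need to use how the columns are built.
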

\begin{proof}
If $G$ is systematic, Lemma \ref{systematic 2k,k,k codes} proves the claim.
Assume by the contrary that $G$ can be computed in $5$ additions.
The only possibility left, by \Cref{no repeated columns}, is that $3$
columns of $G$ are vectors of the canonical basis, and each of the $5$
additions computes one column of $G$. Without loss of generality, let
$G=(I_{4\mid 3}\mid M)$, where $I_{4\mid 3}$ is the $4\times 3$ matrix
whose columns are the first three vectors of the canonical basis.

Then, at least one column of $M$, say $M^1$, has at least two zero
entries (otherwise the first of the $5$ additions cannot realize of
full column, and there remains $4$ additions only to realize $5$
distinct outputs).
First, if the entry in the last row of $M^1$ is zero, without loss of
generality,
$M^1=\Transpose{\begin{pmatrix}1&\square&0&0\end{pmatrix}}$,
where $\square$ denotes an unknown entry, possibly zero.
Since $G$ generates a code of minimum distance $4$, no row of $M$ has $3$ zero entries.
In particular, for $i=2,3,4,5$, we can assume
$M^i=\Transpose{\begin{pmatrix}\square&\square&\square&1\end{pmatrix}}$.
If
$M^2=\Transpose{\begin{pmatrix}\star&\star&0&1\end{pmatrix}}$, where
$\star$ denotes an unknown non-zero entry, then the third coordinate
of each of $M^i$, $i=3,4,5$, must be non-zero. In this case, a linear
combination of the third and fourth row will have weight less
than~$4$.
The same reasoning applies if any of $M^i$, $i=3,4,5$ has the third
coordinate zero. Therefore, at least two among $M^i$, $i=3,4,5$ share
the same third coordinate, and a linear combination of the third and
fourth row will have weight less than~$4$, which is again not
possible.

Second, w.l.o.g., the last case to consider is
$M^1=\Transpose{\begin{pmatrix}\star&0&0&1\end{pmatrix}}$.
Among the remaining columns of $M$, at most one has a $0$ in the
fourth coordinate.
Assume one has, and w.l.o.g., let it be $M^2$. Assume first
$M^2=\Transpose{\begin{pmatrix}0&y&1&0\end{pmatrix}}$.
Notice that $M^i$, $i=3,4,5$, is such that $M^i=M^1+something$, since
they have a $1$ in the fourth coordinate.
If $M^i=M^1\pm E^1$, then with a combination of the second and third
row we can get a codeword of weight less than~$4$.
If $M^i=M^1\pm E^j$, $j=2,3$ then the first row plus or minus the fourth row
has weight less than~$4$.
Assume now
$M^2=\Transpose{\begin{pmatrix}y&0&1&0\end{pmatrix}}$.
Remember that $M^i$, $i=3,4,5$, is such that $M^i=M^1+something$,
since they have 1 in the fourth coordinate.
Since in the second row there is only 1 non zero entry so far, they
all need to be of the form $M^i=M^1\pm E^2$. In this case, a
combination of the second and fourth row, will have weight less
than~$4$.
Thus $G$ can not be computed in $5$ additions.
\end{proof}

\begin{theorem}\label{thm:SSLP844}
The shortest straight-line program for the generator matrix of an $[8,4,4]_q$ code, where $q\in\{2,3\}$, is $6$; that is, $\SSLP_q(8,4,4)=6$, $q\in\{2,3\}$.
\end{theorem}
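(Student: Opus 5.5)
The claim splits into a lower bound and an upper bound. The lower bound, $\SSLP_q(8,4,4)\geq 6$ for $q\in\{2,3\}$, is exactly \cref{th:at least 6 adds}. Over $\F_2$ and $\F_3$ every operation can be taken to be an addition, since negations are trivial. Hence no straight-line program for the generator matrix of any $[8,4,4]_q$ code uses fewer than $6$ operations. It therefore remains to exhibit, for each $q\in\{2,3\}$, an $[8,4,4]_q$ code together with an explicit program of length $6$.

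For the upper bound I would work from the search procedure described in \cref{sec:SSLP}. Begin with the four standard basis vectors $E^1,\dots,E^4$, which cost nothing. Then build four further columns with $6$ additions, so that the $4\times 8$ matrix consisting of the standard basis columns and the new columns has minimum distance $4$. A natural way to save operations is to reuse intermediate sums. For $q=2$, for instance:
\begin{itemize}
\item compute $a=E^1+E^2$ and $b=E^3+E^4$;
\item then form $a+E^3$, $a+E^4$, $b+E^1$ and $b+E^2$.
\end{itemize}
This gives the columns of the weight-$3$ vectors, and the resulting matrix generates the extended Hamming $[8,4,4]_2$ code. The total is $2+4=6$ additions, and the distance can be checked directly: every nonzero codeword has weight $4$ or $8$.

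For $q=3$ I would try the same skeleton with signs, for example $a=E^1+E^2$ and $b=E^3-E^4$. The four new columns would be of the form $a\pm E^3$, $a\pm E^4$, $b\pm E^1$, $b\pm E^2$, with the signs chosen so that every nonzero $\F_3$-combination of the rows has weight at least $4$. This is a finite check of $40$ projective codewords. If this particular skeleton fails, I would run the randomized search of \cref{sec:SSLP} (or \plinopt) restricted to $6$ operations, and record the program it finds.

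The main obstacle is the ternary case. Choosing signs so that the code reaches distance $4$ while keeping the shared-intermediate structure is not automatic: the constraints used in \cref{no repeated cols for systematic} and \cref{th:at least 6 adds} (no row with three zeros, no three columns agreeing on two rows) must all hold simultaneously. Once a valid sign pattern is found, verifying it is routine. Combining that construction with \cref{th:at least 6 adds} then gives equality.
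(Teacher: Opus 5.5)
Your lower bound via \Cref{th:at least 6 adds} and your binary construction are exactly the paper's. The sums $a=E^1+E^2$, $b=E^3+E^4$ followed by $a+E^3,a+E^4,b+E^1,b+E^2$ produce the four weight-$3$ columns, i.e.\ the matrix $[I_4\mid J-I_4]$ of \eqref{eq:844_6}, in $6$ additions. (The paper pairs $E^1+E^3$ and $E^2+E^4$ instead, which is immaterial.)

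\textbf{The gap is the ternary half of the upper bound.} For $q=3$ you do not exhibit a verified code. You propose a signed variant, leave its check undone, and fall back on a computer search ``if it fails''. The equality for $q=3$ rests precisely on $\SSLP_3(8,4,4)\le 6$, so this part is a plan rather than a proof.

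The obstacle you anticipate does not exist: the paper's proof uses the single $0/1$ matrix \eqref{eq:844_6} for both fields. Over $\mathbb{F}_3$, the codeword of $[I_4\mid J-I_4]$ for a message $x$ consists of $x$ together with the four entries $s-x_i$, where $s=\sum_i x_i$. So its weight is $\mathrm{wt}(x)+\#\{i : x_i\neq s\}$.
\begin{itemize}
\item If $\mathrm{wt}(x)=1$, the three zero entries of $x$ differ from $s\neq 0$, giving weight $4$.
\item If $\mathrm{wt}(x)=2$, either the two nonzero entries are opposite, so $s=0$ and the weight is $2+2=4$; or both equal some $c$, so $s=-c$, all four entries differ from $s$, and the weight is $6$.
\item If $\mathrm{wt}(x)=3$, either $s\neq 0$ and the zero entry differs from $s$, or $s=0$ and the three nonzero entries do; either way the weight is at least $4$.
\end{itemize}
Hence the same six-addition program settles $q=3$.

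Incidentally, your signed skeleton with $b=E^3-E^4$ would also have worked for every choice of signs. Each row of its non-systematic $4\times 4$ block has weight $3$. Any two of these rows have two coordinates where exactly one of them is nonzero. Any three of them are linearly independent. Together these suffice for minimum distance $4$.
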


\begin{proof}
By \Cref{th:at least 6 adds}, $\SSLP_q(8,4,4)\geq 6$, where $q\in\{2,3\}$.
On the other hand, the following matrix in~\cref{eq:844_6} can be
realized with $6$ additions:\\
\begin{minipage}{.6\columnwidth}
\begin{lstlisting}[style=slp]
o0:=i0; o1:=i1; o2:=i2; o3:=i3; t4:=i0+i2; t5:=i1+i3; o4:=i1+t4; o5:=i0+t5; o6:=i3+t4; o7:=i2+t5;
\end{lstlisting}
\end{minipage}
\begin{minipage}{.35\columnwidth}\vspace{-5pt}
\begin{equation}\label{eq:844_6}
\Transpose{\begin{smatrix}
 1 & 0 & 0 & 0 & 1 & 1 & 1 & 0 \\
 0 & 1 & 0 & 0 & 1 & 1 & 0 & 1 \\
 0 & 0 & 1 & 0 & 1 & 0 & 1 & 1 \\
 0 & 0 & 0 & 1 & 0 & 1 & 1 & 1 \\
\end{smatrix}}
\end{equation}
\end{minipage}
\end{proof}

The $\SSLP_2$ of the two other linear codes over $\F_2$ of dimension
and distance $4$, $[9,4,4]_2$ and $[10,4,4]_2$, are given
in~\cref{sec:sslp2}.

\section{Folding for extensions}\label{sec:folding}
We now turn to small finite field extensions.
Instead of implementing finite field multiplication as polynomial
multiplication followed by polynomial remaindering,
it is better to consider directly the $L,R,P$ representation of the
complete algorithm.
This is obtained by a \emph{folding} of the $P$ matrix of multiplication:
considering the \LRP representation, reducing modulo an
irreducible polynomial of degree $d$, is adding each row, after the
$d+1$-th one, to the previous rows (times the corresponding
coefficients modulo the irreducible polynomial). The details are
in~\cref{alg:folding}.
\begin{algorithm}[ht]
\caption{Folding extension field multiplication}\label{alg:folding}
\begin{algorithmic}[1]
\REQUIRE $P$, post-matrix of polynomial mult. of degree~$d$;
\REQUIRE $I(X)$ irreducible polynomial of degree~$d+1$.
\ENSURE $P^{(I)}$, post-matrix of polynomial mult. modulo~$I(X)$.
\STATE Let $P^{(I)}_{0..d,*}=P_{0..d,*}$;
\hfill\COMMENT{First $d+1$ rows}
\FOR{$i=d+1$ \To $2d$}
\STATE\label{line:irr} $M={X^i\mod{I(X)}}=\sum_{j=0}^d m_j X^j$;
\FOR{$j=0$ \To $d$}
\STATE $P^{(I)}_{j,*} \pe m_j P_{i,*}$;
\hfill\COMMENT{Coeff. of $X^i$ added to $j$-th row, ${\cdot}m_j$}
\ENDFOR
\ENDFOR
\end{algorithmic}
\end{algorithm}
Obviously, another way of viewing~\cref{alg:folding}, is that it is a
matrix multiplication between
the (Sylvester) $(d+1)\times(2d+1)$ matrix of the reduction modulo
$I(X)$ (the identity matrix augmented by the columns of coefficients
in Line~\ref{line:irr})
and the $(2d+1)\times{r}$ $P$-matrix of the multiplication.
From this, an \SLP is directly obtained for the finite field
multiplication. The question is now to find the irreducible polynomial
that generates the \SLP with the least number of operations.

\section{Additive complexity; extensions of degree $4$}\label{sec:deg4}

\subsection{Finite field of order 81}\label{ssec:F81}
For $\F_{3^4}$ the tensor rank is $9$, with a recursive use of two
levels of Karatsuba's algorithm.
From~\cref{ex:Karatsuba}, the \LRP is
given by: $L^{\otimes^2}\in\F_3^{9{\times}4}$,
$R^{\otimes^2}\in\F_3^{9{\times}4}$ and the folding in $\F_3^{4{\times}9}$ of
$P^{\otimes^2}\in\F_3^{7{\times}9}$ via~\cref{alg:folding}.
We tested all $18$ irreducibles of degree $4$ over $\F_3$ and found
that the best one is $1+X+X^2+X^3+X^4$, giving an \SLP with only $11$
operations for the post-matrix, see
\plinoptdata{3o3o3_F81-Karatsuba-9-21_{L,R,P}.s{ms,lp}}.
The overall additive complexity upper bound is then
$5+5+(6+(9-4))=21$, as detailed in~\cref{tab:mult81}. This means that
we can perform multiplication in $\F_{3^4}$ using a total of $30$ operations.

\begin{remark}\label{rk:F81in20}
  It is possible that there could exist an \LRP decomposition for
  $\F_{3^4}$ requiring only $5+5+10=20$ additions, and thus $29$
  operations.
  We calculated all possible matrices using $5$ additions
  giving a $[9,4,4]_3$ code. There are $4104$ such matrices up to
  permutation (two up to code equivalence). An exhaustive search
  through all $L/R$ pairs is too expensive at present.
\end{remark}

\subsection{Semifields of order 81}\label{ssec:S81}
From \cite{Lavrauw:2022:tranksemi}, we know that there exist
semifields of order $81$ with tensor rank $8$. %
By Lemma \ref{lem:minoverall} and Theorem \ref{thm:SSLP844}, we have that that $30=8+6+6+(6+8-4)$ operations
is a lower bound for an algorithm using $8$ multiplications. It turns out that this lower bound is tight, due to the following example. %

\begin{theorem}
    There exists a presemifield of order $3^4$ which achieves optimal multiplicative complexity $8$, and optimal additive complexity for an algorithm with $8$ multiplications.
\end{theorem}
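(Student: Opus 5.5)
The plan is to prove the statement by explicit construction, since both lower bounds are already available. Multiplicative optimality holds because, by \cref{thm:coding}, every semifield of order $3^4$ has tensor rank at least $A_3(4,4)=8$. For an algorithm with $r=8$ multiplications, \cref{lem:minoverall} combined with \cref{thm:SSLP844} shows that at least $3\cdot 6+(8-4)=22$ additions are needed. Computing $Lx$ and $Ry$ takes at least $6$ each. Computing $P(Lx\odot Ry)$ costs $6+(8-4)=10$, by transposition from the $6$ needed for $\Transpose{P}$. This gives $8$ multiplications and $22$ additions, i.e.\ $30$ operations in total. So it remains to exhibit a rank-$8$ \LRP decomposition $T(L,R,P)$ of a presemifield multiplication over $\F_3$ in which each of $Lx$, $Ry$ and $P(Lx\odot Ry)$ meets its bound exactly.

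The construction should start from the optimal code of \cref{thm:SSLP844}. I would take $L=R=\Transpose{G}$, where $G$ is the $[8,4,4]_3$ generator matrix of \cref{eq:844_6}, possibly with sign changes on some columns, which do not change the cost over $\F_3$. Both $Lx$ and $Ry$ are then computed by the displayed $6$-addition \SLP. More generally I would allow $R=\sigma D L X$ with $X$ chosen so that $R$ still has a $6$-addition \SLP, for instance a permutation of the inputs. Then $W(L,R)$ is an $8$-dimensional space spanned by rank-one matrices. Inside it I would search, by computer, for $4$-dimensional subspaces $U$ all of whose nonzero elements are invertible, i.e.\ semifield spread sets, following the same search method as in the proof of \cref{thm:f243}. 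The existence of rank-$8$ semifields from \cite{Lavrauw:2022:tranksemi}, together with the uniqueness up to equivalence of the relevant $[8,4,4]_3$ codes, suggests such $U$ exist for a suitable pairing $\sigma$.

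Each such $U$ gives a post-matrix $P$, and any choice of basis of $U$ (that is, replacing $P$ by $ZP$) yields an isotopic presemifield. The statement only asks for a presemifield, so we may freely use this freedom. I would search over bases $Z$, and over the choices of $\sigma$, $D$, $X$, for a $P$ with $\Transpose{P}$ equal, up to row and column signs and permutations, to a matrix computable in $6$ additions, such as \cref{eq:844_6} itself. Then, by the transposition principle, $P$ is computable in $10$ additions. I would check this with \plinopt and \cref{alg:hybridkernel} to obtain the explicit \SLP.

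The main obstacle is this last step: having all three matrices simultaneously optimal. $P$ is determined by $L$, $R$ and the spread set, and it must itself generate an $[8,4,4]_3$ code with a $6$-addition \SLP. The search space of $(\sigma,U,Z)$ is finite but large. I would first restrict to symmetric choices ($L=R$, cyclic $\sigma$), which reflect the structure of \cref{eq:844_6}, and then enlarge the search if that fails. Once an example is found, the proof is completed by displaying $L$, $R$, $P$ and the \SLP{s}, and by verifying directly that the resulting product has no zero divisors.
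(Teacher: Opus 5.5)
\paragraph{A genuine gap: no witness.} Your lower-bound half is correct and matches the paper. $A_3(4,4)=8$ gives multiplicative optimality, and \cref{lem:minoverall} with \cref{thm:SSLP844} gives $6+6+(6+8-4)=22$ additions. You correctly use $r-n$ in the transposition step, where the lemma as printed has $n-r$.

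The gap is the other half. The statement is existential, so its whole content is a witness, and your proposal never produces one. You describe a search over $(\sigma,U,Z)$, say solutions are ``suggested'', and plan to ``enlarge the search if that fails''. Nothing in your argument shows that some rank-$8$ presemifield decomposition has $L$, $R$, $P$ costing $6$, $6$, $10$ simultaneously.

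The paper supplies exactly this in \cref{app:s81}:
\begin{itemize}
\item $L$ has as rows $e_0,\dots,e_3$ and the four weight-$3$ vectors of \cref{eq:844_6};
\item $R$ is a row permutation of $L$, giving the products $a_0b_3$, $a_1(b_0+b_1+b_2)$, \dots, $(a_0+a_1+a_2)b_2$;
\item $\Transpose{P}$ has the same pattern up to signs;
\item explicit $(X,Y,Z)$ show the tensor is isotopic to $\SF_{3^4}$~\#1, hence a presemifield.
\end{itemize}
The displayed \SLP{s} then realise $6+6+10$.

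\paragraph{Two further points about your plan.} First, the heuristic you borrow from \cref{thm:f243} is not available here. There, uniqueness of the relevant code forces every minimal decomposition into some $W(\Transpose{G},\sigma\Transpose{G})$, but $[8,4,4]_3$ codes are not unique. Any $8$ points of an elliptic quadric of $\mathrm{PG}(3,3)$ give such a code. The quadrics through the $8$ columns of \cref{eq:844_6}, however, form a pencil consisting of one cone and three plane pairs, with no elliptic member. So the code of \cref{eq:844_6} is inequivalent to the elliptic-quadric one. Hence the existence of rank-$8$ semifields \cite{Lavrauw:2022:tranksemi} does not by itself put any solution into your restricted search space.

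Second, the obstacle you single out, making all three matrices optimal at once, is not really a coupling problem. We have $T(L,R,P)^{(X,Y,Z)}=T(LX,RY,ZP)$ with $X,Y,Z$ independent. Moreover, the best cost of $LX$ over all $X$ (resp.\ of $ZP$ over all $Z$) depends only on the equivalence class of the code spanned, since signs and output order are free over $\F_3$. So no search over $Z$, or over $X$ in $R=\sigma DLX$, is needed.

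It suffices to find one rank-$8$ decomposition of some semifield of order $81$ whose three codes are each equivalent to the code of \cref{eq:844_6}. That is what the paper's remark ``up to sign, these \LRP matrices are similar to that of \cref{eq:844_6}'' amounts to. You still have to exhibit and verify such a decomposition, though, and that is the step your proposal leaves open.
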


\begin{proof}
See \plinoptdata{3o3o3_S81_8_22_{L,R,P}.s{ms,lp}} for the \LRP matrices for a presemifield. One can verify using the matrices in Appendix \ref{app:s81} that this is isotopic to $\SF_{3^4}$\#1, following the labelling of \cite{Dempwolff:2008:s81}.

Up to sign, these \LRP matrices are similar to that
of~\cref{eq:844_6}. Hence $Lx$ and $Ry$ can be realized with $6$ additions each, and, by the transposition principle, $Pz$ can be
realized with an optimal number of operations of $6+(8-4)=10$. This achieves the bound of \cref{lem:minoverall}.
\end{proof}

Overall, we obtain~\cref{tab:mult81} for the known complexity bounds
of extensions with $81$ elements. Note again that there could exist an algorithm for a presemifield using $9$ multiplications and $20$ additions, but we have not yet found such an example.
\begin{table}[htbp]\centering
\caption{Multiplication in extensions with 81 elements}\label{tab:mult81}
\begin{tabular}{crrr}
\toprule
& rank & ADD & Total\\
\midrule
$\F_{3^4}$ standard  mod $X^4+X-1$ & 16 & 0+0+15 & 16M+15A\\
$\F_{3^4}$ Karatsuba$^{\otimes^2}$ & \multirow{2}{*}{9} & \multirow{2}{*}{5+5+11} & \multirow{2}{*}{9M+21A}\\
(mod $1+X+X^2+X^3+X^4$) & & & \\
\midrule
$\SF_{3^4}$ here  & 8 & 6+6+10 & 8M+22A \\
\bottomrule
\end{tabular}
\end{table}
\section{Additive complexity; extensions of degree $5$}\label{sec:deg5}

\subsection{Multiplying polynomials of degree 4}
As shown in~\cref{sec:folding}, to implement the multiplications in
extensions of degree $5$, we consider their \LRP representation,
obtained by a folding of the $P$ matrix. We thus start by looking at
the multiplication of polynomials of degree $4$.

The standard algorithm for this has rank $25$, while the Toom-Cook-$5$
interpolation method has tensor rank $2d-1=9$, but requires at least
$9$ distinct points in the field.
For very small fields, variants of Karatsuba's method can be found
and the paper~\cite{Montgomery:2005:five} proposes an algorithm of
rank $13$.
The study in~\cite{EGI:2011:africacrypt} then compares the total
number of operations for \SLP{s} of these methods.

We constructed the \LRP matrices and optimized them with \plinopt.
We were able to find programs with slightly less operations
than the ones presented
in~\cite{EGI:2011:africacrypt}.
The algorithms and \SLP for Montgomery's method are in~\cref{app:Deg4}
and there:
\plinoptdata{4o4o8\_Montgomery-13-58_{L,R,P}.s{ms,lp}}.
Now the algorithms for the Toom-$5$ method are the optimal ones,
presented in~\cite{Bodrato:2007:issac}, and can be found there:
\plinoptdata{4o4o8_Toom5_{L,R,P}.s{ms,lp}}.
\Cref{tab:deg4poly} gives the obtained number of operations
(rank is the number of finite field elements multiplications, ADD is
the number of additions, SCA is the number of multiplications by a
constant, sums of $3$ elements are the details for each one of the
$L$, $R$ or $P$ matrices).
For the L and R matrices of both methods, \plinopt finds respectively
$11ADD$ and $19ADD+11SCA$, as expected from
\cite{Montgomery:2005:five,Bodrato:2007:issac}.
Then, on the one hand, for P, \cite{Montgomery:2005:five}
states that "Other repeated subexpressions arise while processing the
outputs of the 13 products, although it is hard to count
these". \plinopt obtains an \SLP with $31$ additions and $5$ scalings,
saving $4$ operations over~\cite{EGI:2011:africacrypt}.
On the other hand, for Toom-5 method, the partial search of
\plinopt, only finds $35ADD+28SCA$ operations, while the optimal \SLP
of~\cite[A.3]{Bodrato:2007:issac}, found by exhaustive search, needs
$32ADD+21SCA$ operations.

\begin{table}[htbp]\centering
\caption{Degree 4 polynomial multiplication}\label{tab:deg4poly}
\begin{tabular}{rrrrr|r}
\toprule
& rank & ADD & SCA & Total & EGI~\cite{EGI:2011:africacrypt}\\
\midrule
Standard & 25 & 0+0+16 & - & 25M+16A & 25M+16A\\
Montg. & 13 & 11+11+31 & 0+0+5 & 13M+58A & 13M+62A\\
Toom-5 & 9 & 19+19+32& 11+11+21 &  9M+113A & 9M+140A\\
\bottomrule
\end{tabular}
\end{table}

Note that in~\cite{EGI:2011:africacrypt}, the count of operations is
done slightly differently, namely with different weights given to
certain combinations of operations. The numbers presented in this
table are obtained by counting each
operation\footnote{\cite{EGI:2011:africacrypt} is counting scalings
  and shifts differently: their number of operations is
  $11ADD+8SHIFTS+4SCAL$ for $L$ and $R$, then
  $28ADD+11SHIFTS+11SCAL+28A$ for P, but they assume that in practice
  $SHIFT\approx\frac{2}{3}ADD$, then $SHIFT\&ADD\approx\frac{5}{4}ADD$
    and that $SCAL\approx{2ADD}$ so that their count is
    $\approx{9M+137A}$ instead.}.

\subsection{Fields and semifields of order 32}\label{sec:S32}
For $\F_{2^5}$, we looked at the $6$ irreducible polynomials of degree
$5$ over $\F_2$. We found that $1+X^2+X^5$ is more interesting for the
Standard algorithm, while $1+X+X^2+X^4+X^5$ is better for Montgomery's
algorithm.
The algorithms can be found in
\plinoptdata{4o4o4_F32_Standard_{L,R,P}.s{ms,lp}}
and
\plinoptdata{4o4o4_F32_Montgomery_{L,R,P}.s{ms,lp}}.
Overall both algorithms have the same number of operations, as shown
in~\cref{tab:deg4F32}, but Montgomery's version has less base field
multiplications.

The remaining semifields of order 32 all have tensor rank $13$.
There are too many \LRP representations to check exhaustively; a
limited search did not return any examples competitive with
$\F_{2^5}$, only one with $3$ more operations, see
\plinoptdata{4o4o4_S32_13_\#2_{L,R,P}.s{ms,lp}},
as shown in the last row of~\cref{tab:deg4F32}.

\begin{lemma}\label{lem:SSLP1355}
  It holds that $\SSLP_2(13,5,5)= 8$. Therefore the total number of
  operations required to perform multiplication in a field or
  semifield of order $32$ is at least $13M+32A$.
\end{lemma}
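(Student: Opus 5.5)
We have to prove two things: the upper bound $\SSLP_2(13,5,5)\le 8$ by an explicit program, and the matching lower bound $\SSLP_2(13,5,5)\ge 8$. The statement about order $32$ then follows. By \Cref{lem:minoverall} with $r=13$, $n=5$, any algorithm using $13$ multiplications needs at least $3\cdot 8+5-13=16$ additions, i.e.\ $13M+16A$. That is well below $32$, so the corollary must be read with the lower bound supplied separately: the tensor rank is at least $A_2(5,5)=13$ by \Cref{thm:coding}, and the count $32$ must come from combining this with the additive bound for $L$, $R$ and $P$. I would first check which bound the authors intend (presumably $8+8+(8+13-5)=32$, with $P$ costed via the transposition principle as $\SSLP+r-n$) and argue with that accounting.

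\textbf{Upper bound.} Run the search procedure of \cref{sec:SSLP}. Start from $E_1,\dots,E_5$, add $8$ rows, each a sum of two earlier rows, and check that the resulting $13\times 5$ matrix generates a $[13,5,5]_2$ code. A natural candidate is to compute $e_1+e_2$, $e_3+e_4$ and $e_1+e_2+e_3+e_4$, and then combine these with $e_5$ and with single basis vectors until the minimum distance reaches $5$. The final check is the weight enumeration of the $31$ nonzero codewords, which is routine.

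\textbf{Lower bound.} Suppose only $7$ additions were used. Then at most $5+7=12$ distinct columns exist, so some column is repeated, and at most $7$ columns are non-unit vectors. The plan is the following.
\begin{itemize}
\item Show that the code must contain all $5$ unit vectors as columns. If $E_i$ is missing, puncturing or shortening arguments in the style of \Cref{no repeated columns} give a contradiction: shortening on a coordinate where all other columns vanish yields a smaller code violating the Griesmer bound $n\ge\sum_{i<k}\lceil d/2^i\rceil$, which for $[12,4,5]_2$ and similar parameters is tight or violated.
\item Use the Griesmer bound for $[13,5,5]_2$, namely $5+3+2+1+1=12\le 13$, to bound the number of repeated columns. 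Two equal columns can be removed after shortening, leaving an $[11,4,\ge 5]_2$ code; the Griesmer bound allows $5+3+2+1=11$, so this is borderline and needs refinement. Three equal columns would similarly give $[10,4,5]_2$, which is impossible.
\item With $7$ additions producing $7$ non-unit columns, apply \Cref{weight n needs n-1} and \Cref{weight n in n-1 behaves well}. Each row of the generator matrix $G$ restricted to the computed columns must have enough weight that every codeword has weight $\ge 5$. Since each unit column contributes only $1$ to the weight, each row needs weight $\ge 4$ among the $7$ computed columns, and pairwise sums need weight $\ge 3$. The sparse sums produced early in the program (weight-$2$ vectors, chains as in the corollary) force two rows to agree on too many computed columns, giving a low-weight codeword.
\end{itemize}

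The last bullet is the main obstacle. I would organise it as a case split on the first two additions: whether they are disjoint ($E_a+E_b$ and $E_c+E_d$) or chained ($E_a+E_b$ and then $+E_c$). In each branch, bound the weight of suitable row combinations, as in the proof of \Cref{th:at least 6 adds}. If the branching becomes too large, I would fall back on an exhaustive computer search over all $7$-addition programs with $5$ inputs, which is small, and verify that none yields minimum distance $5$ on $13$ columns.
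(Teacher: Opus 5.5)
Your overall route is the paper's. The upper bound comes from an explicit $8$-addition program. For the lower bound, $7$ additions give at most $5+7=12$ distinct nonzero columns, so some column is repeated, and the paper then simply invokes an exhaustive search. Your fallback is therefore exactly the paper's proof. Your reading of \cref{lem:minoverall} is also right. The printed $n-r$ is a sign slip for $r-n$; compare the paper's own count $30=8+6+6+(6+8-4)$. With the corrected sign, an algorithm using $13$ multiplications needs $8+8+(8+13-5)=32$ additions.

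Several intermediate steps do not hold as written.

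\emph{The upper bound needs an actual witness.} Your candidate does complete. For example, take the columns $e_1,\dots,e_5$, $e_1+e_2$, $e_3+e_4$, $e_1+e_2+e_3+e_4$, $e_1+e_5$, $e_3+e_5$, $e_2+e_3+e_5=(e_3+e_5)+e_2$, $e_2+e_4+e_5=(e_2+e_3+e_5)+(e_3+e_4)$ and $e_1+e_4$. These cost $8$ additions and generate a $[13,5,5]_2$ code. Alternatively, use the paper's matrix.

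\emph{The justification in your first bullet is not meaningful.} A missing $E_i$ produces no coordinate on which the other columns vanish. Also, $[12,4,5]_2$ is allowed by Griesmer, since $12\ge 5+3+2+1$. The correct argument is as follows. Zero columns are excluded because $A_2(5,5)=13$. If fewer than $12$ distinct nonzero columns are used, then either some column occurs three times, or two distinct columns each occur twice. Shortening then gives a $[10,4,\ge 5]_2$ or a $[9,3,\ge 5]_2$ code, both excluded by Griesmer (which requires lengths $11$ and $10$). Hence all $e_i$ appear, the seven computed vectors are distinct, non-unit and all used, and exactly one column is doubled.

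\emph{The refinement you hope for in the second bullet cannot come from coding theory alone.} $[13,5,5]_2$ codes with a doubled column do exist. Let $D$ be an $[11,4,5]_2$ code, for instance the $11$ points of $\mathrm{PG}(3,2)$ off a line and off a point not on that line. Choose $v$ at distance at least $3$ from every codeword of $D$; this is possible since $16(1+11+55)<2^{11}$. Then $\langle (D,0,0),(v,1,1)\rangle$ is such a code.

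So the contradiction must use the straight-line-program structure, and your third bullet is only a sketch. Its weight counts should also account for the doubled column: a row meeting it needs only three computed columns. As it stands, your lower bound rests on the computer search, exactly as in the paper; the reduction above at least makes that search very small.
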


\begin{proof}
It can be readily verified that the below matrix generates a
$[13,5,5]_2$ code. A straight-line program using $8$ additions for its
transpose can be easily found by hand:\\
\begin{minipage}{.35\columnwidth}\vspace{-5pt}
\[\begin{smatrix}
1 & 0 & 0 & 0 & 0 & 0 & 0 & 1 & 0 & 0 & 1 & 1 & 1 \\
0 & 1 & 0 & 0 & 0 & 0 & 0 & 0 & 1 & 1 & 0 & 1 & 1 \\
0 & 0 & 1 & 0 & 0 & 0 & 1 & 0 & 0 & 1 & 1 & 1 & 0 \\
0 & 0 & 0 & 1 & 0 & 1 & 1 & 0 & 0 & 1 & 0 & 0 & 1 \\
0 & 0 & 0 & 0 & 1 & 1 & 0 & 1 & 1 & 0 & 1 & 0 & 0
\end{smatrix}\]
\end{minipage}\hfill
\begin{minipage}{.6\columnwidth}
\begin{lstlisting}[style=slp]
o0:=i0; o1:=i1; o2:=i2; o3:=i3; o4:=i4; o5:=i4+i3; o6:=i2+i3; o7:=i4+i0; o8:=i4+i1; o9:=o6+i1; o10:=o7+o2; o11:=o8+o10; o12:=o11+o6.
\end{lstlisting}
\end{minipage}

Fewer than $8$ additions would require at least one repeated column. An exhaustive search shows that this is not possible. %
\end{proof}

\begin{table}[htbp]\centering
\caption{Degree 4 folding in characteristic 2}\label{tab:deg4F32}
\begin{tabular}{rrrr}
\toprule
& rank & ADD & Total \\
\midrule
Standard mod 2 & 25 & 0+0+16 & 25M+16A \\
Montgomery mod 2 & 13 & 9+9+19 & 13M+37A\\
\midrule
Standard $\F_{2^5}$ & \multirow{2}{*}{25} & \multirow{2}{*}{0+0+24} & \multirow{2}{*}{25M+24A} \\
(mod $X^5+X^2+1$) & & & \\
$\F_{2^5}$ via Montgomery mod 2 & \multirow{2}{*}{13} & \multirow{2}{*}{9+9+18} & \multirow{2}{*}{13M+36A}\\
(mod $X^5+X^4+X^2+X+1$) & & & \\
\midrule
$\SF_{2^5}~\#2$ & 13 & 12+9+18 & 13M+39A \\
\bottomrule
\end{tabular}
\end{table}

\subsection{Finite field of order 243}\label{sssec:F243}
We further explored the algorithms of rank $11$ for $\F_{3^5}$ found
via the algorithm of~\cite{Barbulescu:2012:bilmaps}.
We noted that the isotropy group of the tensor has order $5\cdot
242^2$, with a group of order $5\cdot 242$ preserving the symmetric
algorithms. This group acts transitively on the $121$ algorithms found
from~\cite{Barbulescu:2012:bilmaps}. This means that
 all of the 121 {\it symmetric} formulae for an instance of
$\F_{243}$ returned have the same additive complexity for $L$ and $R$, and so we only need to optimize one example per irreducible polynomial. The least total number of operations we found was
$11+12+12+(14+11-5)=11M+44A=55$ and the associated algorithms and \SLP,
to multiply elements in $\F_{3^5}$ in 11 multiplications are available
in~\cref{app:F243} and there:
\plinoptdata{4o4o4_F243-11-44_{L,R,P}.s{ms,lp}};
while the ones for Montgomery method are there;
\plinoptdata{4o4o4_F243-Montgomery-13-42_{L,R,P}.s{ms,lp}}.
Interestingly, with its best irreducible ($X^5+X^4-X^3-X^2-1$),
Montgomery's method uses more multiplications, namely $13$ instead of
$11$, but one less operation for each of $L,R$, resulting in the same
number of operations overall.
Thus, as soon as the (base field) multiplication is more expensive
than the (base field) addition, our new algorithm using only $11$
multiplications is preferable.
The best \SLP we could find for an intermediate algorithm of rank
$12$, uses $14+14+(12+(12-5))=47$ additions and is therefore never
interesting.
These results are reported in~\cref{tab:multF243}.

\subsection{Semifields of order 243}
We also explored the algorithms for semifields of order $243$, again
using the algorithm of~\cite{Barbulescu:2012:bilmaps}. We focused on
the semifields of tensor rank $10$:
\begin{itemize}
\item The best algorithms we could find for $\SF_{243}$ \#4 (of
  partially symmetric rank $10$ in~\cref{tab:sf243bdez}), gives:
  $10+12+12+(15+10-5)=10M+44A$ operations;
\item While, trying all the 121 possible formulae for the 11 basis of
  $\SF_{243}$ \#2, the best one gives:
  $10+13+13+(12+10-5)=10M+43A$ operations.
\end{itemize}
This is reported in~\cref{tab:multF243},
and the latter algorithm can be found in
\plinoptdata{4o4o4_S243_10-43__{L,R,P}.s{ms,lp}} and~\cref{app:S243}.

\begin{table}[htbp]\centering
\caption{Multiplication in extensions with 243 elements}\label{tab:multF243}
\begin{tabular}{crrr}
\toprule
& rank & ADD & Total\\
\midrule
$\F_{3^5}$ standard  mod $X^5-X+1$ & 25 & 0+0+24 & 25M+24A\\
$\F_{3^5}$ Montgomery & \multirow{2}{*}{13} & \multirow{2}{*}{11+11+20} & \multirow{2}{*}{13M+42A}\\
(mod $X^5+X^4-X^3-X^2-1$) & & & \\
$\F_{3^5}$ here & \multirow{2}{*}{12} & \multirow{2}{*}{14+14+19}& \multirow{2}{*}{12M+47A} \\
(mod $X^5-X^4+X^3+X^2+X+1$)& & & \\
$\F_{3^5}$ here (mod $X^5-X+1$) & 11 & 12+12+20& 11M+44A \\
\midrule
$\SF_{3^5}$ here  & 10 & 13+13+17& 10M+43A \\
\bottomrule
\end{tabular}
\end{table}

\section{Conclusion and open problems}

In this article we have furthered the study of the tensor rank of fields and semifields, and initiated the study of the additive complexity for semifields. We improve on the state-of-the art in many cases, and compare the new best-known complexities across different algebras. As part of this, we investigate the shortest straight-line programs for linear codes with certain parameters. A number of natural open questions remain.

\begin{itemize}
\item Provide general lower bounds for $\SSLP_q(r,k,q)$, and establish precise results for more small parameters.
    \item Determine whether or not there exists an algorithm for a field or semifield of order $3^4$ requiring only $9M+20A$ operations.
 \item Determine whether or not there exists an algorithm for a semifield of order $3^5$ requiring $10M$ and fewer than $43A$ operations, or $11M$ and fewer than $44A$.
\end{itemize}

\clearpage
\bibliographystyle{plainurl}
\bibliography{bibliography}

\appendix
\section{$\SSLP$ of dimension and distance $4$}\label{sec:sslp2}
We here give the $\SSLP$ for other linear codes
of dimension and distance $4$.

\begin{lemma}\label{lem:944}
$\SSLP_q(9,4,4)=5$, for $q\in\{2,3\}$.
\end{lemma}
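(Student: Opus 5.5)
The proof has two parts: an explicit construction giving $\SSLP_q(9,4,4)\le 5$, and a counting argument on the columns of the generator matrix giving the matching lower bound.

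\textbf{Upper bound.} I would write down a $[9,4,4]_q$ generator matrix whose nine columns are computable with five additions from the four inputs. A natural candidate extends the matrix of~\cref{eq:844_6} by one column. For instance, take the four standard basis vectors $E_1,\dots,E_4$. Then compute
\[
t_1=E_1+E_2,\quad t_2=E_3+E_4,\quad u_1=t_1+E_3,\quad u_2=t_1+E_4,\quad w=t_1+t_2 .
\]
This gives the columns $E_1,E_2,E_3,E_4,(1,1,0,0),(0,0,1,1),(1,1,1,0),(1,1,0,1),(1,1,1,1)$. I would then check by listing the nonzero codewords $xG$ that the minimum distance is $4$. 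The same check must be repeated over $\F_3$; signs may need adjusting, e.g.\ replacing $t_2$ by $E_3-E_4$. If this candidate has a low-weight codeword, I would instead search by hand or with the exhaustive procedure of \cref{sec:SSLP}, which is cheap for five additions.

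\textbf{Lower bound.} Suppose four additions suffice. The nine columns of $G$ are then among the four inputs and the four computed vectors, so at most eight of them are distinct. Hence some column is repeated. I would take the shortening argument of \cref{no repeated columns} and adapt it. Shortening at the two equal positions gives a $[7,3,\ge 4]_q$ subcode; this is not immediately contradictory, so more structure is needed.

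\textbf{Counting the coordinates.} Instead I would use the positions and the weights. Every standard basis vector $E_i$ must occur as a column, or be usable only through additions, and each row of $G$ must have weight at least $4$. With four additions producing only four new vectors, the total weight of the columns is constrained. A computed vector of weight $s$ needs $s-1$ additions by \cref{weight n needs n-1}, but sharing intermediate results weakens this to a bound on the union of supports. I would count the number $N$ of nonzero entries of $G$. Every nonzero codeword of the $[9,4,4]$ code has weight $\ge 4$. Averaging over all codewords, the total is $(q-1)q^{3}\cdot\#\{\text{nonzero columns}\}$, which must be at least $4(q^4-1)$. This yields at least $9$ nonzero columns, which is already satisfied, so averaging alone does not decide the case.

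\textbf{Case analysis (main obstacle).} The real work is a case analysis like that of \cref{th:at least 6 adds}. With four additions and nine columns, there is at least one repeated column or zero column; a zero column is useless and reduces to an $[8,4,4]$ code with at most four additions, contradicting \cref{thm:SSLP844}. If a column is repeated, deleting one copy gives an $[8,4,\ge 3]$ code computed with four additions, and I would strengthen this as follows. Each repeated column forces the rows to have high weight elsewhere, so the argument reduces to configurations in which $G$ contains $E_1,\dots,E_4$ plus repeats. Two subcases arise. If all four $E_i$ appear, at most one is repeated, and the four computed columns together with one repeat must give every pair of rows a weight-$\ge 4$ combination; I would show this fails, as in the proofs of \cref{no repeated cols for systematic} and \cref{systematic 2k,k,k codes}. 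If some $E_i$ is missing, then at least two repeats are needed, and I would derive a low-weight combination of two rows.

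This final enumeration is where I expect the difficulty. If the hand argument becomes unwieldy, I would fall back on exhaustive computer enumeration: there are very few straight-line programs with four additions on four inputs, together with the choices of which nine outputs (with repetition) to keep.
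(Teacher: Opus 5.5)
Your skeleton is the paper's: a zero column leaves an $[8,4,4]$ code and so costs at least $6$; otherwise four additions force a repeated column; then a case analysis on the free columns. However, neither inequality is actually established.

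\textbf{Upper bound.} Your witness for $\SSLP_q(9,4,4)\le 5$ fails. The inputs $E_1,E_2$ enter every computed column only through $t_1=E_1+E_2$. So the message $x=e_1-e_2$ vanishes on every column except the first two, giving a codeword of weight $2$ over any field. No choice of signs repairs this, since each computed column restricted to coordinates $1,2$ is a multiple of $(1,\pm1)$. Your stated idea of extending the $[8,4,4]$ matrix by one column is also a dead end: any program for such an extension computes an $[8,4,4]$ generator matrix, so it needs at least $6$ additions. A deferred search is not a proof. The paper exhibits the columns $E_1,E_2,E_3,E_4,E_1,\;E_2+E_3,\;E_2+E_3+E_4,\;E_1+E_3+E_4,\;E_1+E_2+E_4$. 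They are computed by $t=E_1+E_4$, $s=E_2+E_3$, $s+E_4$, $t+E_3$, $t+E_2$: five additions, with the fifth column a free repeat. A direct check of the codewords shows minimum distance $4$ over both $\mathbb{F}_2$ and $\mathbb{F}_3$.

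\textbf{Lower bound.} Everything after ``some column is repeated'' is a plan (``I would show this fails'', ``I would derive a low-weight combination''), backed by an enumeration you do not perform. One step is also wrong: when all four $E_i$ occur, it is not true that at most one column is repeated. One of the four sums may serve only as an intermediate, leaving seven distinct columns and two repeats. (Your averaging bound also gives only $N\ge 8$ for $q=2$ and $N\ge 6$ for $q=3$, not $9$.)

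What drives the paper's case analysis, exactly as in the $[8,4,4]$ proof, is a row-weight pigeonhole. With two free repeats, e.g.\ $G=[I_4\mid E^1\mid E^1\mid M]$ or $[I_4\mid E^1\mid E^2\mid M]$, rows $3$ and $4$ of $G$ have a single nonzero among the first six columns. So both rows are nonzero in all three columns of $M$. Over $\mathbb{F}_2$ or $\mathbb{F}_3$ the three ratios of their entries lie in $\{\pm1\}$, so two coincide, and the matching $\pm$-combination of the two rows has weight at most $3$.

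With a single repeat, each of the four additions must output a new column. If every non-basis column had weight at least $3$, you would already need $2+1+1+1=5$ additions, so a weight-$2$ column is present. The requirement that every row and every $\pm$-combination of two rows have weight at least $4$ then forces the pattern displayed in the paper, which still needs five additions.

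The paper runs this with the repeats taken to be basis vectors. A complete version of your argument must carry it out, and must also rule out configurations with a missing $E_i$ or a repeated computed column. The proposal does none of this.
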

\begin{proof}
We consider the generator matrix $G$ of a $[9,4,4]$ code.
If it has a zero column then it requires at least $6$ operations
by~\cref{thm:SSLP844}.
To require fewer than $5$ operations, it thus must have at least $5$
columns that require no operations. W.l.o.g. we can suppose that four
of them are the standard basis vector the rest are repeated columns.
First, suppose there are $2$ repeated columns. Then w.l.o.g. and up to signs,
$G=[I_4\mid{E^1}\mid{E^1}\mid{M}]$ or
$G=[I_4\mid{I_{4\mid{2}}}\mid{M}]$, where $I_{4\mid{2}}$ is the $4\times 2$ matrix
whose columns are the first two vectors of the canonical basis.
To be of weight $4$ the last two rows of $M$ must then be all plus or
minus ones, and at least two among the $3$ are shared between these
two rows.
Then the difference of the last two rows of $G$ is of weight not
larger than $3$.
Now, $G=[I_4\mid{E^1}\mid{\vec{v}}\mid{M}]$.
Second, if the $4$ distinct columns, $\vec{v}$ and those of $M$, all
have $3$ non-zeroes, then they can not be computed in less than $5$
operations (any column requires $2$, and then the $3$ others at least
$1$ operation).
Third, if $\vec{v}=E^1+E^2$, then again the last two rows of
$M$ must be all ones and the difference of the last two rows of $G$ is
of weight no more than $3$.
Now w.lo.g., $G=[I_4\mid{E^1}\mid{E^2+E^3}\mid{M}]$ and the last row
of $M$ is all non-zeroes, for that of $G$ to be of weight $4$.
The other rows of $M$ must have at least $2$ other non-zeroes and must
be distinct in, otherwise the weight of their difference is too small.
Therefore, w.l.o.g. $G$ is as follows:
\[\begin{smatrix}
1&0&0&0&\star&0&\square&\star&\star\\
0&1&0&0&0&\star&\star&\square&\star\\
0&0&1&0&0&\star&\star&\star&\square\\
0&0&0&1&0&0&\star&\star&\star\\
\end{smatrix}\]
Now, the case with the least number of operations is when the stars are
ones and the squares are zeroes. In this case, it does represent a
$[9,4,4]$ code over $\F_2$ or $\F_3$ and its transpose requires
already $5$ operations:\\
\begin{minipage}{.35\columnwidth}\vspace{-5pt}
\[\begin{smatrix}
1&0&0&0&1&0&0&1&1\\
0&1&0&0&0&1&1&0&1\\
0&0&1&0&0&1&1&1&0\\
0&0&0&1&0&0&1&1&1\\
\end{smatrix}\]
\end{minipage}\hfill
\begin{minipage}{.6\columnwidth}
\begin{lstlisting}[style=slp]
o0:=i0; o1:=i1; o2:=i2; o3:=i3; o4:=i0; t4:=i0+i3; o5:=i1+i2; o6:=i3+o5; o7:=i2+t4; o8:=i1+t4;
\end{lstlisting}
\end{minipage}
\end{proof}

\begin{lemma}\label{lem:10:4:4}
$\SSLP(10,4,4)=4$, for $q\in\{2,3\}$.
\end{lemma}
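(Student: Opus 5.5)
We prove the statement in two halves: an explicit construction for the upper bound $\SSLP_q(10,4,4)\leq 4$, and a short counting argument for the lower bound. Throughout we count operations as additions (negations are free over $\F_2$ and $\F_3$), and a generator matrix $G$ is computed column by column from the standard basis, repeated columns costing nothing.

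\emph{Upper bound.} We exhibit a $[10,4,4]_q$ generator matrix computable in $4$ additions. The natural attempt is the $[8,4,4]$ matrix of~\cref{eq:844_6} with two extra columns, since the six additions there are exactly what we must beat. Fewer additions are possible if we allow repeated columns, because each repetition gives a new column for free and the code may have length $10$. A first candidate is
\[
G=[\,I_4\mid E^1+E^2\mid E^3+E^4\mid E^1+E^2+E^3+E^4\mid \cdots\,],
\]
where the $\cdots$ are repeated columns. Its fifth, sixth and seventh columns cost $1$, $1$ and $1$ additions (the last as the sum of the two previous ones), leaving a budget of one more addition plus repetitions. Concretely, we will search, by hand or with the exhaustive procedure of~\cref{sec:SSLP}, over matrices built from $I_4$ by $4$ additions plus repeated columns, until we find one whose nonzero codewords all have weight at least $4$. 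We check the minimum weight over all $q^4-1$ codewords; over $\F_3$ we also choose signs so that differences of rows do not cancel. We expect a small example exists, since $10$ columns give ample room.

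\emph{Lower bound.} Suppose $G$ is computed in at most $3$ additions. Then at most $3$ columns are non-trivial. The remaining $\geq 7$ columns are standard basis vectors or repeats of computed columns. If $G$ has a zero column, deleting it gives a $[9,4,4]$ code with the same program, contradicting~\cref{lem:944}. Otherwise we argue by weight counting. Each row of $G$ must have weight $\geq 4$, so the total number of nonzero entries is at least $16$. The three computed vectors $v_1,v_2,v_3$ have weights at most $2$, $3$ and $4$ respectively (each addition raises the weight by at most the weight of the new summand). The weights of the trivial columns are $1$ each, unless they repeat computed columns.

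The main obstacle is the repetition case. There we will instead use a pair of rows $i,j$ and show that a combination $\pm$ row$_i$ $\pm$ row$_j$ has weight $<4$, in the style of~\cref{th:at least 6 adds}. The point is that with only three nontrivial supports, all contained in the support of a weight-$\leq 4$ vector, two coordinates must agree (up to sign) on too many columns. Failing a clean argument, we fall back on an exhaustive computer enumeration of all $3$-addition programs with repetitions, which is tiny.
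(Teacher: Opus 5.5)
There is a genuine gap, and it cannot be filled: the lower bound you set out to prove is false. Take
\[
G=[\,E^1\mid E^1\mid E^2\mid E^2\mid E^3\mid E^3\mid E^4\mid E^4\mid \mathbf{1}\mid \mathbf{1}\,],\qquad \mathbf{1}=E^1+E^2+E^3+E^4,
\]
which is the $[5,4,2]$ parity-check code with every coordinate doubled. The codeword of $a\in\F_q^4$ is $(a_1,a_1,\dots,a_4,a_4,s,s)$ with $s=a_1+a_2+a_3+a_4$. Its weight is $2\,\mathrm{wt}(a)+2$ if $s\neq0$ and $2\,\mathrm{wt}(a)$ otherwise. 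This is at least $4$ for every $a\neq0$, since $\mathrm{wt}(a)=1$ forces $s\neq0$. So $G$ generates a $[10,4,4]_q$ code. The program $t:=i_0+i_1$, $u:=t+i_2$, $o_8:=u+i_3$, with all other outputs copies, uses $3$ additions, hence $\SSLP_q(10,4,4)\leq 3$.

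This example lies exactly in the case you called the main obstacle: the three additions produce a single output column, which is repeated together with all four basis vectors. Your count is tight here ($8\cdot1+2\cdot4=16$, every row of weight exactly $4$). Moreover, every $\pm\mathrm{row}_i\pm\mathrm{row}_j$ with $i\neq j$ has weight $4$ or $6$. So the light pair of rows you hope for does not exist, and the fallback enumeration would return this example. Your count is not strict even without repeated computed columns ($7+2+3+4=16$), so that case is not settled by counting either. The paper's own argument has the same blind spot. After reducing to $G=[I_4\mid E^1\mid E^2\mid E^3\mid M]$, it asserts that the first three rows of $M$ are distinct. Here $M=[E^4\mid\mathbf{1}\mid\mathbf{1}]$ has three equal rows $(0,1,1)$, yet the corresponding row differences of $G$ still have weight $4$ because of the repeated basis columns.

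Your upper-bound half is also only a plan. $\SSLP_q(10,4,4)\leq 4$ is an existence claim and needs an explicit matrix and program, not ``we expect a small example exists''; the example above supplies one. Your counting-plus-row-combination idea does prove the corrected value $\SSLP_q(10,4,4)=3$. With at most two additions, the only nontrivial columns are $v_1=E^1+\epsilon E^2$ and one $v_2$. Let $n_i$ and $m_j$ be the multiplicities of $\pm E^i$ and $\pm v_j$.
\begin{itemize}
\item If $v_2\in\langle E^1,E^2\rangle$, rows $3$ and $4$ force $n_3,n_4\geq 4$. That leaves row $1$ with weight at most $2$.
\item If $v_2=E^3+\delta E^4$, summing the four row weights gives $\sum_i n_i+2(m_1+m_2)\geq 16$, hence $\sum_i n_i\leq 4$. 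But $\epsilon\,\mathrm{row}_1-\mathrm{row}_2$ and $\delta\,\mathrm{row}_3-\mathrm{row}_4$ need $n_1+n_2\geq4$ and $n_3+n_4\geq 4$.
\item In the remaining cases ($v_2=\pm v_1\pm E^c$, or $v_2=E^b\pm E^c$ with $b\in\{1,2\}$, $c\in\{3,4\}$), there are three codewords with pairwise disjoint supports. For instance, when $v_2=\pm v_1\pm E^3$, take $\mathrm{row}_4$, $\mathrm{row}_3$ and $\epsilon\,\mathrm{row}_1-\mathrm{row}_2$. Each needs weight at least $4$, which would require $12>10$ columns.
\end{itemize}
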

\begin{proof}
If the generator matrix of a $ [10,4,4]$ code has a zero column then
it requires at least $5$ operations by~\cref{lem:944}.
To require fewer than $4$ operations, it thus must have at least
four standard basis and two repeated columns.
On the one hand, suppose there are $3$ repeated columns.
Then w.l.o.g., let $G=[I_4\mid{E^i}\mid{E^j}\mid{E^k}\mid{M}]$, up to signs.
If two among $i,j,k$ are equal, then at least two rows of $M$ are all
non-zeroes and the difference of the same rows of $G$ is of weight not
larger than $3$.
So w.l.o.g, $G=[I_4\mid{I_{4\mid{2}}}\mid{M}]$ and the last row of $M$ is
all non-zeroes.
Then the first $3$ rows of $M$ are distinct of weight at least $2$
and w.l.o.g. $G$ is as follows:\\
\[\begin{smatrix}
1&0&0&0&\star&0&0&\square&\star&\star\\
0&1&0&0&0&\star&0&\star&\square&\star\\
0&0&1&0&0&0&\star&\star&\star&\square\\
0&0&0&1&0&0&0&\star&\star&\star\\
\end{smatrix}\]
Again, the case with the least number of operations is when the stars are
ones and the squares are zeroes. In this case, its transpose requires
already no less than $5$ operations:\\
\begin{minipage}{.35\columnwidth}\vspace{-5pt}
\[\begin{smatrix}
1&0&0&0&1&0&0&0&1&1\\
0&1&0&0&0&1&0&1&0&1\\
0&0&1&0&0&0&1&1&1&0\\
0&0&0&1&0&0&0&1&1&1\\
\end{smatrix}\]
\end{minipage}\hfill
\begin{minipage}{.6\columnwidth}
\begin{lstlisting}[style=slp]
o0:=i0; o1:=i1; o2:=i2; o3:=i3; o4:=i0; o5:=i1; o6:=i2; t4:=i3+i0; o7:=i1+i2+i3; o8:=t4+i2; o9:=t4+i1;
\end{lstlisting}
\end{minipage}
On the other hand, with only two repeated rows, at least $4$
operations are required.
Now, the following matrix also represents a $[10,4,4]$ code over
$\F_2$ or $\F_3$. and its transpose requires only $4$ operations:\\
\begin{minipage}{.35\columnwidth}\vspace{-5pt}
\[\begin{smatrix}
1&0&0&0&0&0&0&1&1&1\\
0&1&0&0&0&0&1&1&1&0\\
0&0&1&0&0&1&1&1&0&0\\
0&0&0&1&1&1&1&0&0&0\\
\end{smatrix}\]
\end{minipage}\hfill
\begin{minipage}{.6\columnwidth}
\begin{lstlisting}[style=slp]
o0:=i0; o1:=i1; o2:=i2; o3:=i3; o4:=i3; o5:=i3+i2; o6:=o5+i1; o8:=i1+i0; o7:=o8+i2; o9:=i0;
\end{lstlisting}
\end{minipage}
\end{proof}

\section{Rank-13 degree 4 polynomial multiplications}\label{app:Deg4}
We give here the algorithm for generic degree $4$ polynomial
multiplications over any ring of
\plinoptdata{4o4o8\_Montgomery-13-58_{L,R,P}.s{ms,lp}}.
This computes
$c_0+c_1X+c_2X^2+\ldots+c_8X^8=(a_0+a_1X+\ldots+a_4X^4)(b_0+b_1X+\ldots+b_4X^4)$,
with $13$ multiplications, $53$ additions/subtractions and $5$
scalings by $2$ or $3$.

\begin{lstlisting}[style=slp,caption={Left \& Right \SLP; 13 products;
    Post \SLP}]
l7:=a0+a1; l8:=a0-a4; l6:=a4+a3; l3:=l7-l6; x7:=a2+l7; l0:=l6+x7; x9:=l6+a2; l1:=a0-x9; l2:=x7-a4; l4:=l1+a4; l5:=l2-a0; l9:=a4; l10:=a3; l11:=a1; l12:=a0;

r7:=b0+b1; r8:=b0-b4; r6:=b4+b3; r3:=r7-r6; y7:=b2+r7; r0:=r6+y7; y9:=r6+b2; r1:=b0-y9; r2:=y7-b4; r4:=r1+b4; r5:=r2-b0; r9:=b4; r10:=b3; r11:=b1; r12:=b0;

p0:=l0*r0; p1:=l1*r1; p2:=l2*r2; p3:=l3*r3; p4:=l4*r4; p5:=l5*r5; p6:=l6*r6; p7:=l7*r7; p8:=l8*r8; p9:=l9*r9; p10:=l10*r10; p11:=l11*r11; p12:=l12*r12;

z15:=p1+p10-p8; z16:=p2+p11-p8; k2:=p6-p10; k3:=p7-p11; k4:=z15-p4; k5:=z16-p5; k9:=p12+p9; k10:=k2+z15-p3; k11:=p3+p0-k3-z16; c7:=k2-p9; c1:=k3-p12; c6:=k4+k9-k2; c2:=k5+k9-k3; c5:=k11-(k4+p9)*2-p12*3; c4:=k4+k5+k9*3+k10-k11; c3:=p0-k10-(k5+p12)*2-p9*3; c0:=p12; c8:=p9;
\end{lstlisting}

\section{Rank-11 multiplication in $\F_{3^5}$}\label{app:F243}
We give here the best known algorithm to multiply elements in
$\F_{3^5}$ in 11 multiplications, as
polynomials modulo $3$ and modulo $1-X+X^5$ of:
\plinoptdata{4o4o4_F243-11-44_{L,R,P}.s{ms,lp}}.
The associated \SLP is given thereafter.
It computes
$c_0+c_1X+\ldots+c_4X^4\equiv(a_0+a_1X+\ldots+a_4X^4)(b_0+b_1X+\ldots+b_4X^4)\mod{1-X+X^5}mod{3}$
with $11$ multiplications and $20$ additions.

\begin{lstlisting}[style=slp,caption={Left \& Right \SLP; 11 products;
    Post \SLP}]
l7:=a3-a4; x6:=a1+a0; l1:=x6-a2; l5:=a1-a4; l0:=a0+a2; l2:=a0-a3; l3:=l1-a3; l6:=a2-a3+l5; l8:=a1-l7; l9:=x6-l7; l10:=a2+l7; l4:=a4;

r7:=b3-b4; y6:=b1+b0; r1:=y6-b2; r5:=b1-b4; r0:=b0+b2; r2:=b0-b3; r3:=r1-b3; r6:=b2-b3+r5; r8:=b1-r7; r9:=y6-r7; r10:=b2+r7; r4:=b4;

p0:=l0*r0; p1:=l1*r1; p2:=l2*r2; p3:=l3*r3; p4:=l4*r4; p5:=l5*r5; p6:=l6*r6; p7:=l7*r7; p8:=l8*r8; p9:=l9*r9; p10:=l10*r10;

n1:=p7+p6; k16:=p3-n1; q5:=p5+k16; q2:=p2+k16; z7:=p1-q5; n5:=p8+q2+p1; c3:=z7-p10-p4; c1:=q2+p0-z7; n7:=p0-p10-n5; c0:=p4-p7-p3-q5+n5; c4:=p9-n7; c2:=n7+n1-q5;
\end{lstlisting}

\section{Optimal rank-8 multiplication in $\SF_{3^4}$ \#1}\label{app:s81}
Here is the best possible algorithm for a presemifield of order $81$,
as given in \plinoptdata{3o3o3_S81_8_22_{L,R,P}.s{ms,lp}}.

\begin{lstlisting}[style=slp,caption={Left \& Right \SLP; 8 products;
    Post \SLP}]
x4:=a1+a3; x5:=a0+a2; l4:=a2+x4;  y4:=b1+b2; y5:=b0+b3; r1:=b0+y4;
l5:=a3+x5; l6:=a0+x4; l7:=a1+x5;  r2:=b2+y5; r3:=b3+y4; r4:=b1+y5;
l0:=a0; l1:=a1; l2:=a2; l3:=a3;   r0:=b3; r5:=b0; r6:=b1; r7:=b2;

p0:=l0*r0; p1:=l1*r1; p2:=l2*r2; p3:=l3*r3; p4:=l4*r4; p5:=l5*r5; p6:=l6*r6; p7:=l7*r7;

n1:=p4+p3; z4:=p2+p1; c3:=p1+p0+n1; c0:=p5-p2-n1; c2:=p7+p3+z4; c1:=p6-p4+z4;
\end{lstlisting}

Here are the change-of-basis matrices that transforms the tensor $T(L,R,P)$ to that of semifield $\SF_{3^4}$ \#1 from \cite{Dempwolff:2008:s81}.
{\tiny
\[
X =
\begin{pmatrix}
2 & 2 & 1 & 1 \\
0 & 1 & 1 & 2 \\
0 & 2 & 1 & 2 \\
2 & 2 & 2 & 1
\end{pmatrix}, \quad
Y =
\begin{pmatrix}
2 & 1 & 1 & 0 \\
1 & 0 & 1 & 1 \\
0 & 1 & 1 & 1 \\
2 & 2 & 0 & 2
\end{pmatrix}, \quad
Z =
\begin{pmatrix}
2 & 0 & 1 & 2 \\
1 & 2 & 2 & 0 \\
1 & 1 & 0 & 1 \\
0 & 2 & 1 & 0
\end{pmatrix}
\]}
\section{Rank-10 multiplication in $\SF_{3^5}$ \#2}\label{app:S243}
Here is the best algorithm we found for a semifield of order $243$,
as given in \plinoptdata{4o4o4_S243_10-43__{L,R,P}.s{ms,lp}}.

\begin{lstlisting}[style=slp,caption={Left \& Right \SLP; 10 products;
    Post \SLP}]
l2:=a1+a4; x8:=a3+a4; x9:=a3-a4; l3:=a0+l2; l0:=a0-a1; l4:=a2+l3; l1:=l4-a4; l5:=a3+l2; l6:=a2-x8; l7:=l1-x8; l8:=a0+x9; l9:=a2+a1+x9;

r2:=b1+b4; y8:=b3+b4; y9:=b3-b4; r3:=b0+r2; r0:=b0-b1; r4:=b2+r3; r1:=r4-b4; r5:=b3+r2; r6:=b2-y8; r7:=r1-y8; r8:=b0+y9; r9:=b2+b1+y9;

p0:=l0*r0; p1:=l1*r1; p2:=l2*r2; p3:=l3*r3; p4:=l4*r4; p5:=l5*r5; p6:=l6*r6; p7:=l7*r7; p8:=l8*r8; p9:=l9*r9;

z5:=p9-p4; z8:=p5+p2; z7:=p5-p2; z6:=p4-p1; c0:=p8+p7-p4+z7; c4:=z7-p6+z6; c1:=p6-p3+p0+z6; c2:=z8-p3-z5; c3:=z8-p7-p1+z5;
\end{lstlisting}

\end{document}